\documentclass[11pt,letterpaper]{amsart}

%%%%%%%%%%%%%%%%%%%%%%%%%%%%%%%%%%%%%%%%%%%%%%%%%%%%%%%%%%%%%%%%%% Packages %%%

\usepackage[left=1in,top=1in,right=1in,bottom=1in]{geometry}
\usepackage[foot]{amsaddr}
\usepackage{amsmath}
\usepackage{amsthm}
\usepackage{amssymb}
\usepackage{url}
\usepackage{hyperref}
\usepackage{todonotes}
\usepackage{mathtools}
\usepackage{mathrsfs}
\usepackage[T1]{fontenc}
\usepackage{accents}
\usepackage{comment}
\pdfoutput=1
 
%%%%%%%%%%%%%%%%%%%%%%%%%%%%%%%%%%%%%%%%%%%%%%%%%%%%%%% Custom environments %%%

\newtheorem{Thm}{Theorem}
\newtheorem{Lem}[Thm]{Lemma}
\newtheorem{Rem}[Thm]{Remark}
\newtheorem{Prop}[Thm]{Proposition}

\newtheorem{Cor}[Thm]{Corollary}
\newtheorem{Conj}[Thm]{Conjecture}
\theoremstyle{remark}
\newtheorem{Exa}[Thm]{Example}
\theoremstyle{definition}
\newtheorem{Def}[Thm]{Definition}

%%%%%%%%%%%%%%%%%%%%%%%%%%%%%%%%%%%%%%%%%%%%%%%%%%%%%%%%%%% Custom commands %%%

\newcommand{\N}{\ensuremath{\mathbb{N}}}
\newcommand{\Z}{\ensuremath{\mathbb{Z}}}
\newcommand{\R}{\ensuremath{\mathbb{R}}}

\newcommand{\F}{\ensuremath{\mathbb{F}}}
\newcommand{\C}{\ensuremath{\mathbb{C}}}
\newcommand{\sF}{\ensuremath{\mathscr{F}}}
\newcommand{\sT}{\ensuremath{\mathscr{T}}}
\newcommand{\cT}{\ensuremath{\mathcal{T}}}
\newcommand{\cU}{\ensuremath{\mathcal{U}}}
\newcommand{\cD}{\ensuremath{\mathcal{D}}}
\newcommand{\PP}{\ensuremath{\mathbb{P}}}

\newcommand{\sC}{\ensuremath{\mathscr{C}}}

\newcommand{\MM}{\ensuremath{\mathrm{MM}}}
\newcommand{\Kp}{\ensuremath{\mathrm{K}}}
\newcommand{\Ra}{\ensuremath{\mathrm{R}}}
\newcommand{\ARa}{\ensuremath{\mathrm{\tilde R}}}
\newcommand{\iv}[1]{[\![{#1}]\!]}

\def\ot{\otimes}
\def\SS{\mathfrak{S}}

%%%%%%%%%%%%%%%%%%%%%%%%%%%%%%%%%%%%%%%%%%%%%%%%%%%%%%%%%%% Document starts %%%

\begin{document}

%%%%%%%%%%%%%%%%%%%%%%%%%%%%%%%%%%%%%%%%%%%%%%%%%% Front-matter information %%%

\title[]{A universal sequence of tensors\\{} for the asymptotic rank conjecture}

\author{Petteri Kaski}
\address{Aalto University, Department of Computer Science}
\email{petteri.kaski@aalto.fi}

\author{Mateusz Micha\l{}ek}
\address{Universit\"at Konstanz, Fachbereich Mathematik und Statistik}
\email{mateusz.michalek@uni-konstanz.de}

\begin{abstract}
The \emph{exponent} $\sigma(T)$ of a tensor $T\in\mathbb{F}^d\otimes\mathbb{F}^d\otimes\mathbb{F}^d$ over a field $\mathbb{F}$ captures the base of the exponential growth rate of the tensor rank of $T$ under Kronecker powers. Tensor exponents are fundamental from the standpoint of algorithms and computational complexity theory; for example, the exponent $\omega$ of square matrix multiplication can be characterized as $\omega=2\sigma(\mathrm{MM}_2)$, where $\mathrm{MM}_2\in\mathbb{F}^4\otimes\mathbb{F}^4\otimes\mathbb{F}^4$ is the tensor that represents $2\times 2$ matrix multiplication.

Strassen [FOCS 1986] initiated a duality theory for spaces of tensors that enables one to characterize the exponent of a tensor via objects in a dual space, called the \emph{asymptotic spectrum} of the primal (tensor) space. While Strassen's theory has considerable generality beyond the setting of tensors---Wigderson and Zuiddam [\emph{Asymptotic Spectra:~Theory, Applications, and Extensions}, preprint, 2023] give a recent exposition---progress in characterizing the dual space in the tensor setting has been slow, with the first universal points in the dual identified by Christandl, Vrana, and Zuiddam~[\emph{J.~Amer.~Math.~Soc.}~36~(2023)]. In parallel to Strassen's theory, the algebraic geometry community has developed a geometric theory of tensors aimed at characterizing the structure of the primal space and tensor exponents therein; the latter study was motivated in particular by an observation of Strassen (implicit in [\emph{J.~Reine Angew.~Math.}~384~(1988)]) that matrix-multiplication tensors have limited universality in the sense that $\sigma(\mathbb{F}^d\otimes\mathbb{F}^d\otimes\mathbb{F}^d)\leq \frac{2\omega}{3}=\frac{4}{3}\sigma(\mathrm{MM}_2)$ holds for all $d\geq 1$. In particular, this limited universality of the tensor $\mathrm{MM}_2$ puts forth the question whether one could construct explicit universal tensors that exactly characterize the worst-case tensor exponent in the primal space. Such explicit universal objects would, among others, give means towards a proof or a disproof of Strassen's asymptotic rank conjecture [\emph{Progr.~Math.}~120 (1994)]; the former would immediately imply $\omega=2$ and, among others, refute the Set Cover Conjecture (cf.~Bj\"orklund and Kaski [STOC~2024] and Pratt [STOC~2024]). 

Our main result is an explicit construction of a \emph{sequence} $\mathcal{U}_d$ of zero-one-valued tensors that is universal for the worst-case tensor exponent; more precisely, we show that $\sigma(\mathcal{U}_d)=\sigma(d)$ where $\sigma(d)=\sup_{T\in\mathbb{F}^d\otimes\mathbb{F}^d\otimes\mathbb{F}^d}\sigma(T)$. We also supply an explicit universal sequence $\mathcal{U}_\Delta$ localised to capture the worst-case exponent $\sigma(\Delta)$ of tensors with support contained in $\Delta\subseteq [d]\times[d]\times [d]$; by combining such sequences, we obtain a universal sequence $\mathcal{T}_d$ such that $\sigma(\mathcal{T}_d)=1$ holds if and only if Strassen's asymptotic rank conjecture holds for $d$. Finally, we show that the limit $\lim_{d\rightarrow\infty}\sigma(d)$ exists and can be captured as $\lim_{d\rightarrow\infty} \sigma(D_d)$ for an explicit sequence $(D_d)_{d=1}^\infty$ of tensors obtained by diagonalisation of the sequences $\mathcal{U}_d$.

As our second result we relate the \emph{absence} of polynomials of fixed degree vanishing on tensors of low rank, or more generally asymptotic rank, with upper bounds on the exponent $\sigma(d)$. Using this technique, one may bound asymptotic rank for all tensors of a given format, knowing enough specific tensors of low asymptotic rank.
\end{abstract}

\maketitle

%%%%%%%%%%%%%%%%%%%%%%%%%%%%%%%%%%%%%%%%%%%%%%%%%%%%%%%%%%%%% Document body %%%

\section{Introduction}

\label{sect:introduction}

\subsection{Exponents of tensors and the quest for universality}

For an infinite field $\F$ and a positive integer constant~$d$, 
the {\em exponent} 
of a nonzero tensor $T\in\F^d\otimes\F^d\otimes\F^d$ is the least nonnegative 
real~$\sigma(T)$ such that the sequence of tensor (Kronecker) powers
\begin{equation}
\label{eq:box-seq}
T^{\boxtimes\mathbb{N}}=(T^{\boxtimes p}:p=1,2,\ldots)
\end{equation}
has its sequence of tensor ranks bounded 
by $\Ra(T^{\boxtimes p})\leq d^{\sigma(T)p+o(p)}$.
Exponents of specific {\em constant-size} tensors $T$ are of fundamental 
significance in the study of algorithms and algebraic complexity 
theory~\cite{burgisser2013algebraic, WigdersonZ2023}. For example, 
Strassen showed that the exponent $\omega$ of 
square matrix multiplication satisfies 
$\omega=2\sigma(\MM_2)$~\cite{Strassen1986,Strassen1988}, 
where $\MM_2$ is the $4\times 4\times 4$ tensor representing the 
multiplication map for two $2\times 2$ matrices. Recently, it was shown 
that the Set Cover 
Conjecture~\cite{CyganDLMNOPSW2016,CyganFKLMPPS15,KrauthgamerT2019}
fails if the exponent 
of a specific $7\times 7\times 7$ tensor $Q$ is sufficiently close 
to one~\cite{BjorklundK2023,Pratt2023}.

The study of exponents of tensors---or, what is the same up to exponentiation, 
the study of {\em asymptotic rank}\footnote{The {\em asymptotic rank} 
of $T\in\F^d\otimes\F^d\otimes\F^d$ is 
$\ARa(T)=\lim_{p\rightarrow\infty}\Ra(T^{\boxtimes p})^{1/p}$ and we have $\ARa(T)=d^{\sigma(T)}$.}~\cite{Gartenberg1985}
of tensors---is difficult. 
The exponent $\omega$ of square matrix multiplication is perhaps the best 
studied nontrivial exponent, and even in its case the best current lower 
bound $\omega\geq 2$ remains the trivial one 
(but see \cite{Blaser1999, landsberg2014new, landsberg2015new, landsberg2017geometry, Conner_Harper_Landsberg_2023}), 
and the best current upper bound 
$\omega\leq 2.371866$~\cite{DuanWZ2023} is a result of extensive work spanning 
decades (e.g.~~\cite{Strassen1969,Pan1978,BiniCRL1979,Schonhage1981,Romani1982,CoppersmithW1982,Strassen1986,CoppersmithW1990,Stothers2010,VassilevskaWilliams2012,LeGall2014,AlmanW2021}) and relying on increasingly sophisticated techniques. 
In parallel to the study of exponents of individual tensors, research 
effort has been invested into developing a structural theory for spaces
of tensors and their exponents, a theory to which the present paper also 
contributes. Two rough lines of research most relevant to our present work 
are as follows.

\medskip
\noindent
{\em Strassen's duality theory---the asymptotic spectrum of tensors.}
The first line of research, announced by Strassen~\cite{Strassen1986} in his
1986 FOCS paper and developed in a sequence of 
papers~\cite{Strassen1987,Strassen1988,Strassen1991} and 
PhD theses~\cite{Burgisser1990,Tobler1991,Mauch1998} (cf.~\cite{Strassen2005}),
builds a duality theory for asymptotic rank of tensors based on 
the theory of preordered commutative semirings, with the direct sum and 
tensor (Kronecker) product of tensors as the pertinent semiring operations, and 
the preorder defined by a rank-capturing tensor restriction relation. 
Wigderson and Zuiddam~\cite{WigdersonZ2023} give 
a recent comprehensive exposition of Strassen's theory and the 
preorder-theoretic topological dual spaces---the {\em asymptotic spectrum} 
of a space of tensors---which enable a tight characterization of the 
asymptotic rank of a tensor via the preorder-monotone homomorphisms 
in the dual. Yet, progress in terms of understanding the structure and
identifying explicit points in the asymptotic spectrum of tensors has been 
slow. Beyond Strassen's original construction of support functionals, 
which are restricted to the subspace of oblique tensors, only recently 
Christandl, Vrana, and Zuiddam~\cite{ChristandlVZ2023} constructed 
a family of explicit {\em universal} spectral points---called quantum 
functionals---using theory of quantum entropy and covariants; 
however, also this dual family is far from yielding broadly tight lower 
bounds on asymptotic rank.

Viewed from the standpoint of Strassen's duality theory, rather than 
working in a dual space, in this paper we seek and obtain as our main result 
an explicit and universal {\em primal} characterization of tensor exponents 
and thus asymptotic rank in $\F^d\ot\F^d\ot\F^d$---before proceeding to our 
results, let us review a second pertinent line of prior research and motivation.

\medskip
\noindent
{\em Geometry of tensors and the asymptotic rank conjecture.}
The second line of research seeks to study spaces of tensors to identify the 
{\em worst-case} behaviour in the space using tools from algebraic geometry,
in particular building on the seminal concept of {\em border rank} of a tensor 
due to Bini, Capovani, Romani, and Lotti~\cite{BiniCRL1979} 
(see also Sch\"onhage~\cite{Schonhage1981}) and its 
geometric characterization via secant varieties of  Segre and Veronese varieties
(e.g.~\cite{buczynska2021apolarity, buczynski2013ranks, landsberg2010ranks, Landsberg2012, Landsberg2019, zak1993tangents}).
For a space $\sF$ of tensors, let $\sigma(\sF)=\sup_{T\in\sF}\sigma(T)$;
as a special case, for $d=1,2,\ldots$
let $\sigma(d)=\sigma(\F_d\otimes\F_d\otimes\F_d)$. 
It is immediate that $1\leq\sigma(d)\leq 2$ and
that $\sigma(1)=1$. Over the complex numbers, it is a nontrivial consequence 
of the geometry of tensors that $\sigma(2)=1$.
Already the value $\sigma(3)$ 
is open and would be of substantial interest to determine. 
For example, it is known that
$\sigma(3)=1$ implies $\omega=2$ by the application of the Coppersmith--Winograd method \cite{CoppersmithW1990} to a specific $3\times 3\times 3$ tensor.
It appears very difficult to prove lower bounds on $\sigma(d)$ apart
from the trivial $\sigma(d)\geq 1$. Indeed, any tensor $T\in (\F^d)^{\otimes 3}$ with $\sigma(T)>1$ would yield an explicit sequence of tensors, namely its Kronecker powers, such that for any constant $C>0$ for $k\gg 0$ we have rank and border rank of $T^{\boxtimes k}$ greater than $C d^k$. Currently, we do not know explicit examples of such sequences for $C=3$ with the current world record lower bound on rank in  \cite{alexeev2011tensor}. One of the main obstacles is lack of methods to prove that a given tensor has high rank or border rank. In case of rank, the state of the art is the substitution method, based on linear algebra. In case of border rank, one looks for polynomial witnesses that vanish on tensors of given bounded border rank. However in this case, most known equations, not found via explicit computations, also vanish on so-called cactus varieties, which fill the ambient space quickly and thus known methods cannot provide super-linear lower bounds on tensor rank \cite{bernardi2013cactus, bernardi2020waring,  buczynska2014secant, galkazka2017vector, iarrobino1999power}. The state of the art for border rank is based on mixture of different methods, barely breaking the bound for $C=2$ \cite{landsberg2019towards}. As the rank of the generic tensor grows quadratically with $d$ the problem is often referred to as an instance of the \emph{hay in a haystack} problem
(phrase due to H. Karloff): find an explicit object that behaves generically. There is hope in the new introduced method of border apolarity \cite{buczynska2021apolarity}, still there is currently no clear path of getting past $C=3$. 

The difficulty of lower bounds and progressively improving upper bounds for 
specific exponents, the exponent $\omega$ in particular, has prompted bold 
conjectures on 
worst-case exponents for broad families of tensors. Most notably, 
writing $\sT_d$ for the space of all tight
tensors in $\F^d\otimes\F^d\otimes\F^d$, {\em Strassen's asymptotic rank 
conjecture} (cf.~\cite[Conjecture~5.3]{Strassen1994}) states that the 
worst-case exponent for this space is the least possible:

\begin{Conj}[Strassen's asymptotic rank conjecture]
\label{conj:Strassen}
For all $d\geq 1$ it holds that $\sigma(\sT_d)=1$.
\end{Conj}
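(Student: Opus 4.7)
Conjecture~\ref{conj:Strassen} is Strassen's famous asymptotic rank conjecture, and as the preceding discussion emphasises, even the case $d=3$ remains open and would already force $\omega=2$. The plan below is therefore not a proof but an outline of how I would try to organise an attack using the tools announced in this paper.

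First, I would use the universal sequence $\cT_d$ promised in the abstract, which satisfies $\sigma(\cT_d)=1$ if and only if the conjecture holds for $d$. This reduces the problem from the entire family $\sT_d$ to a single explicit sequence of zero-one tensors, so that standard upper-bound machinery --- Sch\"onhage's $\tau$-theorem, the laser method and its symmetrised variants, and recursive Kronecker-power constructions --- can in principle be brought to bear on a concrete object rather than on an abstract supremum over tight tensors. In parallel, I would decompose the problem along the support type $\Delta\subseteq [d]\times[d]\times[d]$ using the localised sequences $\cU_\Delta$, aiming to classify tight supports for small $d$ on which the conjecture can be verified combinatorially and then to propagate these successes along the tensor product of supports.

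Second, I would try to combine this with the paper's polynomial-vanishing criterion: the absence of degree-$k$ polynomials vanishing on tensors of asymptotic rank at most $d$ gives upper bounds on $\sigma(d)$. Given enough explicit tight tensors of asymptotic rank $d$, one hopes to conclude that no polynomial obstruction separates them from the generic tight tensor, and hence that $\sigma(\sT_d)=1$. A natural first target is $d=3$, where the search space of tight supports is still small enough to enumerate and where $\sigma(3)=1$ already suffices for $\omega=2$.

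The main obstacle --- and the reason this is still a conjecture --- is that we have essentially no techniques for proving asymptotic rank lower bounds beyond the trivial $\sigma(T)\geq 1$, so any complete argument must be entirely upper-bound-driven and must construct nontrivial rank-reducing identities for the whole family $\sT_d$ simultaneously. Even granting the full strength of the universal sequences, the step from ``this one explicit tensor has asymptotic rank $d$'' to ``every tight tensor does'' passes through Kronecker-power degenerations that are poorly understood in the tight setting outside matrix multiplication, and I would not expect a short proof.
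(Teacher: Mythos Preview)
This statement is a \emph{conjecture}, not a theorem, and the paper does not prove it; indeed the entire point of the paper is to construct explicit universal sequences (in particular $\cT_d$ of Theorem~\ref{thm:tight-universal}) whose exponent equals $\sigma(\sT_d)$, precisely so that future work might have a concrete target for attacking Conjecture~\ref{conj:Strassen}. You correctly recognise this and do not claim a proof.

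Your outline is well aligned with the paper's framework: reducing to the explicit sequence $\cT_d$ via Theorem~\ref{thm:tight-universal}, localising along tight supports $\Delta$ via Theorem~\ref{thm:main-localized}, and invoking the absence-of-equations criterion of Theorem~\ref{thm:noequlowrank} are exactly the tools the paper provides. Your honest assessment of the obstacle---that the passage from individual asymptotic-rank upper bounds to a uniform bound over all of $\sT_d$ requires degeneration or spanning arguments that are currently out of reach---is also accurate and mirrors the paper's own discussion of the ``hay in a haystack'' difficulty and the dichotomy of Corollary~\ref{cor:dichotomy}. There is no gap to name here beyond the fundamental one you already identify: the conjecture is open.
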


Strassen's asymptotic rank conjecture, if true, immediately implies the
algorithmically serendipitous corollary $\omega=2$ in 
particular; cf.~also~\cite{BjorklundK2023,Pratt2023}
for further consequences to algorithms.
A yet stronger conjecture (cf.~B\"urgisser, Clausen, 
and Shokrollahi~\cite[Problem~15.5]{burgisser2013algebraic}; 
also e.g.~Conner, Gesmundo, Landsberg, Ventura, and 
Wang~\cite[Conjecture~1.4]{ConnerGLVW2021} 
as well as Wigderson and Zuiddam~\cite[Section 13, p.~122]{WigdersonZ2023}) 
states that the least possible exponent is shared by 
all concise tensors in $\F^d\ot\F^d\ot\F^d$:

\begin{Conj}[Extended asymptotic rank conjecture]
\label{conj:Strassen-extended} 
For all $d\geq 1$ it holds that $\sigma(d)=1$.
\end{Conj}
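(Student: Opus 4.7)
The plan is to exploit the explicit universal sequence $\mathcal{U}_d$ (announced in the abstract) satisfying $\sigma(\mathcal{U}_d)=\sigma(d)$, which reduces Conjecture~\ref{conj:Strassen-extended} from a supremum over the infinite collection $\F^d\ot\F^d\ot\F^d$ to a statement about a single explicit object: the conjecture becomes equivalent to $\sigma(\mathcal{U}_d)=1$. The appeal of this reduction is that concrete combinatorial and algebro-geometric features of one explicit zero-one tensor may be amenable to attack in situations where a direct supremum argument over a moduli space is not.

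To establish $\sigma(\mathcal{U}_d)\leq 1$, I would proceed in two stages. First, I would construct $\mathcal{U}_d$ combinatorially by packing a sufficiently rich parametric family of tensors of format $d\times d\times d$ into a single larger zero-one tensor whose support is tight (so that Strassen's support functional machinery is available) and whose Kronecker powers admit a laser-method degeneration to many independent matrix-multiplication-type subtensors. Second, the identity $\sigma(\mathcal{U}_d)=\sigma(d)$ would be verified by the usual restriction/degeneration chain: $\mathcal{U}_d$ restricts (up to sub-exponential overhead) to every $T\in\F^d\ot\F^d\ot\F^d$, yielding $\sigma(T)\leq\sigma(\mathcal{U}_d)$, while the reverse inequality would follow by bounding $\Ra(\mathcal{U}_d^{\boxtimes p})$ against the worst-case $\sigma(d)$ via subadditivity of rank under direct sums of structural pieces and multiplicativity under Kronecker products.

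An alternative and arguably more promising route---aligned with the second contribution of the paper---is to bypass the rank computation of $\mathcal{U}_d$ entirely and argue by polynomial obstructions: show that the set of tensors in $\F^d\ot\F^d\ot\F^d$ with asymptotic rank at most $d^{1+\varepsilon}$ is cut out set-theoretically by polynomials of some bounded degree, and then exhibit enough specific tensors of low asymptotic rank (padded matrix multiplication tensors $\MM_n$, tight tensors from laser-method analyses, Coppersmith--Winograd-type tensors, etc.) to force this vanishing locus to be the whole ambient space. Combined with the universal reduction, this would yield $\sigma(d)\leq 1+\varepsilon$ for every $\varepsilon>0$, hence $\sigma(d)=1$.

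The main obstacle is precisely the bottleneck flagged in the introduction: we lack a sufficient supply of tensors provably of asymptotic rank close to $d$, and the polynomial equations currently known to witness high border rank also vanish on cactus varieties that fill the ambient space too rapidly, preventing the polynomial-obstruction route from reaching beyond roughly $C=2$ or $C=3$ in lower bounds of the form $\ARa\geq Cd$. Breaking past this wall---perhaps by combining border-apolarity techniques with the structural constraints that the universal $\mathcal{U}_d$ imposes on any degeneration scheme, or by self-bootstrapping in which $\mathcal{U}_d$ is recognised as a degeneration of products of $\MM$-type tensors---appears to be the genuine hard content of the conjecture. Absent such a breakthrough, the proposal reduces the conjecture to an equivalent but equally difficult geometric statement rather than resolving it.
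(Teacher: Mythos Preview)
The statement you are attempting is Conjecture~\ref{conj:Strassen-extended}, which the paper does \emph{not} prove; it is a famous open problem whose truth would immediately give $\omega=2$ and refute the Set Cover Conjecture. There is therefore no ``paper's own proof'' to compare your attempt against, and any purported proof would be a major result far beyond the scope of this paper.

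Your write-up is not actually a proof but a research programme, and you acknowledge as much in your final paragraph. The two routes you sketch---(i) analyse the universal sequence $\mathcal{U}_d$ directly, and (ii) use absence of low-degree equations to bound $\sigma(d)$---are precisely the two tools the paper \emph{provides} (Theorems~\ref{thm:main} and~\ref{thm:noequlowrank}) as potential avenues toward the conjecture, not as a resolution of it. So your proposal accurately summarises the paper's strategy but, as you concede, does not advance beyond the known obstacles.

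Two factual corrections are worth noting. First, the paper's $\mathcal{U}_d$ is \emph{not} constructed to have tight support; tightness is a feature of the separate sequence $\mathcal{T}_d$ (Definition~\ref{def:Td}, Theorem~\ref{thm:tight-universal}), which is tailored to Strassen's original Conjecture~\ref{conj:Strassen} rather than the extended version. Second, the mechanism by which $\mathcal{U}_d$ captures every $T\in(\F^d)^{\otimes 3}$ is not a direct restriction of $\mathcal{U}_d$ to $T$, but rather the composition-basis identity $T^{\boxtimes q}=\sum_{g}T^gT^{(g)}$ combined with the polynomial-in-$q$ size of $\sC_q^{[d]^3}$; this is a linear-span argument, not a degeneration in the usual sense.
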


A key result supporting Conjecture~\ref{conj:Strassen} and
Conjecture~\ref{conj:Strassen-extended}, implicit in 
Strassen~\cite[Proposition~3.6]{Strassen1988} and highlighted by 
Christandl, Vrana, and Zuiddam~\cite[Proposition~2.12]{ChristandlVZ2021} 
as well as Conner, Gesmundo, Landsberg, and 
Ventura~\cite[Remark~2.1]{ConnerGLV2022}, 
is that tensor rank is known to be nontrivially submultiplicative
under Kronecker products; stated in terms of worst-case tensor exponents, 
Strassen proved that for all $d\geq 1$ it holds that $\sigma(d)\leq \frac{2\omega}{3}$.

Viewed in terms of exponents of tensors and universality, we can rephrase 
Strassen's result as stating that the exponent of the matrix multiplication 
tensor $\MM_2$ controls from above the exponent of all other tensors, namely 
we have $\sigma(d)\leq \frac{4}{3}\sigma(\MM_2)=\frac{2\omega}{3}$. 
This rephrasing
suggests that one should seek explicit constructions of tensors 
$U\in\F^d\otimes\F^d\otimes\F^d$ that are worst-case {\em universal}
for the class with $\sigma(d)=\sigma(U)$. In rough analogy with 
computational complexity theory, such explicit tensors capture 
the ``hardest'' tensors in a class of tensors and, for example, would 
provide an explicit object of study towards resolving Conjectures
\ref{conj:Strassen} and~\ref{conj:Strassen-extended}.

\subsection{Our results---explicit universal sequences of zero-one-valued tensors}

While we do not present explicit individual tensors $U$ that are 
universal, as our main result we present an explicit {\em sequence} $\cU$ 
of tensors that is universal and consists of zero-one-valued tensors 
in coordinates. Towards this end, let us extend the definition 
of the exponent of a tensor $T$ to a sequence
\begin{equation}
\label{eq:tensor-seq}
\cT=(T_j\in\F^{s_j}\otimes\F^{s_j}\otimes\F^{s_j}:j=1,2,\ldots)
\end{equation}
of nonzero tensors. The {\em exponent} of the sequence $\cT$ is the 
least nonnegative real $\sigma(\cT)$ such that 
$\Ra(T_j)\leq s_j^{\sigma(\cT)+o_j(1)}$. 
From \eqref{eq:box-seq} and \eqref{eq:tensor-seq} we immediately have 
$\sigma(T^{\boxtimes\mathbb{N}})=\sigma(T)$ for all tensors 
$T\in\F^d\otimes\F^d\otimes\F^d$. 
Our main result is that there is an explicit sequence of tensors that characterizes the exponent of $\F^d\otimes\F^d\otimes\F^d$
and thus enables an approach towards resolving Strassen's conjecture.

\begin{Thm}[Main; A universal sequence of tensors for $d$ fixed]
\label{thm:main}
For all $d\geq 1$ there is an explicit sequence $\cU_d$ 
of zero-one-valued tensors with $\sigma(\cU_d)=\sigma(d)$. 
\end{Thm}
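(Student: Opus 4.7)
The strategy is to reduce the theorem to a support-localised version, construct explicit zero-one sequences for each support by a combinatorial encoding of coefficients, and combine the resulting pieces via direct sums. First, observe that $\sigma(d) = \max_{\Delta\subseteq[d]^3}\sigma(\Delta)$, where $\sigma(\Delta) := \sup\{\sigma(T) : \mathrm{supp}(T)\subseteq\Delta\}$, the maximum being attained at $\Delta = [d]^3$. Since direct sums of zero-one tensors are zero-one, and the exponent of a direct sum of finitely many sequences equals the maximum of the component exponents, it suffices to build, for each $\Delta$, a zero-one sequence $\mathcal{U}_\Delta = (U_{\Delta,j})_j$ with $\sigma(\mathcal{U}_\Delta) = \sigma(\Delta)$, and then set $U_{d,j} := \bigoplus_{\Delta\subseteq[d]^3} U_{\Delta,j}$. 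As a byproduct this also furnishes the support-localised sequences promised in the abstract.

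For a fixed $\Delta$, the plan is to encode arbitrary $\mathbb{F}$-coefficient tensors with support in $\Delta$ as zero-one tensors in a moderately enlarged ambient space. I would enumerate an explicit dense sequence of target tensors $T_j$ with $\mathrm{supp}(T_j)\subseteq \Delta$ (e.g., tensors with integer entries of increasing height over a fixed basis, yielding $\sigma(T_j) \to \sigma(\Delta)$), together with integer parameters $p_j \to \infty$ and $m_j = d^{o(p_j)}$, and build a zero-one tensor $U_{\Delta,j} \in (\mathbb{F}^{n_j})^{\otimes 3}$ with $n_j = d^{p_j}\,m_j$ satisfying: \textbf{(i)} $T_j^{\boxtimes p_j}$ is obtained from $U_{\Delta,j}$ by a triple of linear maps (a tensor restriction), forcing $\mathrm{R}(U_{\Delta,j}) \geq \mathrm{R}(T_j^{\boxtimes p_j}) \geq d^{\sigma(T_j) p_j}$; and \textbf{(ii)} $U_{\Delta,j}$ admits an explicit decomposition witnessing $\mathrm{R}(U_{\Delta,j}) \leq n_j^{\sigma(\Delta)+o_j(1)}$. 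A natural encoding uses auxiliary $[m_j]$-coordinates to record coefficient data combinatorially: each nonzero entry of $T_j^{\boxtimes p_j}$ is represented by a zero-one pattern in the auxiliary slots, and the restriction map in (i) collapses the pattern back to its $\mathbb{F}$-value. The choice $m_j = d^{o(p_j)}$ ensures $n_j^{o(1)} = m_j$, so exponents measured relative to $n_j$ agree with those measured relative to $d^{p_j}$ up to $o(1)$.

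The principal obstacle is the simultaneous fulfilment of \textbf{(i)} and \textbf{(ii)}: the encoding must be rich enough to embed arbitrary $\mathbb{F}$-valued coefficients (for the lower bound) yet sparse enough that the host zero-one tensor does not inflate rank (for the upper bound). A natural stepping stone is the case of nonnegative-integer-coefficient tensors, where an entry $t\in\mathbb{N}$ is faithfully represented by $t$ parallel rank-one zero-one blocks that recover $t$ under summation in the restriction; general $\mathbb{F}$-coefficient tensors can then be approached by approximating with rational tensors and invoking upper semicontinuity of rank, or by absorbing scalars into the linear restriction maps. The delicate step is verifying (ii) uniformly, which requires transferring a near-optimal rank decomposition of $T_j^{\boxtimes p_j}$ into a decomposition of the zero-one blow-up $U_{\Delta,j}$ with only sub-exponential overhead in $n_j$. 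Once both (i) and (ii) are in place, the $\limsup$ computation immediately yields $\sigma(\mathcal{U}_\Delta) \geq \limsup_j \sigma(T_j) = \sigma(\Delta)$ (from (i) and $n_j^{o(1)}=m_j$) and $\sigma(\mathcal{U}_\Delta) \leq \sigma(\Delta)$ (from (ii)), completing the construction and hence the theorem after assembly via direct sums.
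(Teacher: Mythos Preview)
Your proposal has genuine gaps and diverges substantially from the paper's argument.

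\medskip
\textbf{Gap 1: density of integer tensors in exponent.} You assert that enumerating tensors with integer entries of increasing height yields $\sigma(T_j)\to\sigma(\Delta)$. This is not justified: the exponent $\sigma(\cdot)$ is not known to be upper- or lower-semicontinuous in any sense that would let integer (or rational) tensors approximate the supremum over all of $\F^\Delta$. Upper semicontinuity of \emph{rank} goes the wrong direction for what you need here, and asymptotic rank has no established continuity properties. Without this, your lower bound $\sigma(\cU_\Delta)\geq\sigma(\Delta)$ collapses.

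\medskip
\textbf{Gap 2: property (ii) is the whole theorem.} You correctly flag the upper bound as ``the delicate step,'' but you offer no mechanism for it. Transferring a rank decomposition of $T_j^{\boxtimes p_j}$ (whose summands are rank-one tensors with arbitrary $\F$-valued vectors) to your zero-one blow-up $U_{\Delta,j}$ is not a matter of overhead accounting; there is no reason the blow-up should admit a decomposition of comparable length. Moreover, your proposed encoding (representing an entry $t\in\N$ by $t$ parallel blocks) runs into a size problem: the entries of $T_j^{\boxtimes p_j}$ can be as large as $H^{p_j}$ where $H$ is the height of $T_j$, so the auxiliary dimension $m_j$ would be forced to grow like $d^{\Theta(p_j)}$, contradicting your requirement $m_j=d^{o(p_j)}$.

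\medskip
\textbf{What the paper does instead.} The paper avoids both obstacles by never selecting individual ``target'' tensors at all. It observes that the Kronecker power map $S\mapsto S^{\boxtimes q}$ has its image linearly spanned by an explicit zero-one basis $(T^{(g)}:g\in\sC_q^{[d]^3})$ of only polynomial size $\binom{d^3-1+q}{d^3-1}$, and sets $U_{d,q}=\bigoplus_g T^{(g)}$. The two inequalities then come from subadditivity of rank applied to the two linear expressions $S^{\boxtimes q}=\sum_g S^g T^{(g)}$ and $T^{(g)}=\sum_f \lambda_{f,g} S_f^{\boxtimes q}$ (the latter via polynomial interpolation, using only that $|\F|>q$). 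No density, no coefficient encoding, and no knowledge of which tensors have high exponent are required: the composition basis controls \emph{every} $S\in\F^{d\times d\times d}$ simultaneously.
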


In coordinates, the $q$\textsuperscript{th} 
tensor in the sequence $\cU_d$ admits explicit 
combinatorial expression as a union of orbit-indicator tensors under 
a particular action of the symmetric group $\SS_q$. This combinatorial 
structure enables us to study the exponent $\sigma(\Delta)$
of the space of tensors spanned by 
$\{e_i\otimes e_j\otimes e_k:(i,j,k)\in\Delta\}\subseteq\F^d\otimes\F^d\otimes\F^d$ for a nonempty $\Delta\subseteq[d]\times[d]\times[d]$; when equality
holds, we have $\sigma(\Delta)=\sigma(d)$. 

\begin{Thm}[Support-localized universal sequences of tensors]
\label{thm:main-localized}
For all nonempty $\Delta\subseteq[d]\times[d]\times[d]$ there is 
an explicit sequence $\cU_\Delta$ of zero-one-valued tensors 
with $\sigma(\cU_\Delta)=\sigma(\Delta)$. 
\end{Thm}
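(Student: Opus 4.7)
My plan is to imitate, for general $\Delta$, the construction Theorem~\ref{thm:main} uses when $\Delta=[d]\times[d]\times[d]$. For each $q\geq 1$, let $\SS_q$ act diagonally on $\Delta^q\subseteq([d]\times[d]\times[d])^q$ by simultaneously permuting the $q$ factors, and enumerate the resulting orbits as $O_1,\dots,O_{M_q}$. Each orbit $O_\alpha$ produces the zero-one orbit-indicator tensor $I_{O_\alpha}=\sum_{(\mathbf{i},\mathbf{j},\mathbf{k})\in O_\alpha} e_\mathbf{i}\ot e_\mathbf{j}\ot e_\mathbf{k}\in(\F^d)^{\ot q}\ot(\F^d)^{\ot q}\ot(\F^d)^{\ot q}$, and I set $\cU_{\Delta,q}$ to be the direct sum $\bigoplus_{\alpha=1}^{M_q} I_{O_\alpha}$ placed in disjoint coordinate blocks. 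This gives a zero-one tensor in $\F^{s_q}\ot\F^{s_q}\ot\F^{s_q}$ with $s_q = M_q d^q$. Because an $\SS_q$-orbit is determined by its underlying multiset, $M_q\leq\binom{|\Delta|+q-1}{q}$ is polynomial in $q$, so $\log s_q = q\log d + o(q)$.

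For the easy direction $\sigma(\Delta)\leq\sigma(\cU_\Delta)$, I would take any $T\in\F^d\ot\F^d\ot\F^d$ with support in $\Delta$ and note that the identity $T^{\boxtimes q}=\sum_\alpha c_\alpha(T)\,I_{O_\alpha}$ holds inside a single coordinate block, where $c_\alpha(T)=\prod_{(i,j,k)\in\Delta}t_{ijk}^{\,n_\alpha(i,j,k)}$ and $n_\alpha(i,j,k)$ is the multiplicity of $(i,j,k)$ in any representative of $O_\alpha$. Since each $I_{O_\alpha}$ is a direct summand of $\cU_{\Delta,q}$, the tensor $T^{\boxtimes q}$ arises from $\cU_{\Delta,q}$ by a linear restriction (rescale the $\alpha$-th block by $c_\alpha(T)$ and collapse the blocks together), so $\Ra(T^{\boxtimes q})\leq\Ra(\cU_{\Delta,q})$. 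Passing to limits and taking the supremum over $T$ supported in $\Delta$ yields $\sigma(\Delta)\leq\sigma(\cU_\Delta)$.

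The reverse direction $\sigma(\cU_\Delta)\leq\sigma(\Delta)$ is the heart of the argument, and I would carry it out by polynomial interpolation. The monomials $c_\alpha(X)=\prod_{(i,j,k)\in\Delta}X_{ijk}^{n_\alpha(i,j,k)}$ are pairwise distinct in the $|\Delta|$ indeterminates $X_{ijk}$, hence linearly independent. Using that $\F$ is infinite, I can choose $M_q$ tensors $T_1,\dots,T_{M_q}$ supported in $\Delta$ so that the $M_q\times M_q$ matrix $[c_\alpha(T_\beta)]$ is invertible; inverting it produces scalars $\lambda_{\alpha\beta}\in\F$ with $I_{O_\alpha}=\sum_\beta\lambda_{\alpha\beta}\,T_\beta^{\boxtimes q}$ for every $\alpha$. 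Since rank is subadditive on sums and on block-disjoint direct sums, $\Ra(\cU_{\Delta,q})\leq\sum_\alpha\Ra(I_{O_\alpha})\leq M_q^2\cdot\max_\beta\Ra(T_\beta^{\boxtimes q})\leq M_q^2\cdot d^{\sigma(\Delta)q+o(q)}$, and the polynomial factor $M_q^2$ is swallowed into the $o(q)$ exponent, giving $\sigma(\cU_\Delta)\leq\sigma(\Delta)$.

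The principal obstacle I expect is precisely this interpolation step: I must exhibit $M_q$ test tensors $T_\beta$ whose supports stay inside $\Delta$ while making the Vandermonde-like matrix $[c_\alpha(T_\beta)]$ invertible. Infiniteness of $\F$ together with the distinctness of the multiset-indexed monomials $c_\alpha$ is what makes this possible, and, crucially, the resulting identities $I_{O_\alpha}=\sum_\beta\lambda_{\alpha\beta}T_\beta^{\boxtimes q}$ are exact, with no slackness from border-rank or approximation issues.
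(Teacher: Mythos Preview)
Your proposal is correct and takes essentially the same approach as the paper. Your orbit indicators $I_{O_\alpha}$ are exactly the composition-basis tensors $T^{(g)}$ for $g\in\sC_q^\Delta$ (cf.\ Lemma~\ref{lem:kdp}(1)), your universal tensor $\cU_{\Delta,q}=\bigoplus_\alpha I_{O_\alpha}$ coincides with the paper's $U_{\Delta,q}$, and your two directions---linear restriction of $\cU_{\Delta,q}$ onto $T^{\boxtimes q}$ for $\sigma(\Delta)\leq\sigma(\cU_\Delta)$, and polynomial interpolation writing each $I_{O_\alpha}$ as a combination of Kronecker powers for $\sigma(\cU_\Delta)\leq\sigma(\Delta)$---mirror the proof of Theorem~\ref{thm:mainintext} combined with Lemma~\ref{lem:estimate_from_Tg2} and Proposition~\ref{prop:T^ginimage}.
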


From Theorem~\ref{thm:main-localized} we obtain the following corollary 
for tight tensors and Strassen's conjecture (Conjecture~\ref{conj:Strassen}).
We need short preliminaries. 
We say that the set $\Delta$ is {\em tight} if there exist injective functions
$\alpha, \beta,\gamma:[d]\rightarrow\Z$
such that $\alpha(i)+\beta(j)+\gamma(k)=0$ for all $(i,j,k)\in\Delta$. 
A tensor $T\in\F^d\ot\F^d\ot\F^d$ is {\em tight} if it admits a 
coordinate-representation whose support is contained in a tight set.

\begin{Thm}[A universal sequence of tensors for Strassen's conjecture]
\label{thm:tight-universal}
For all $d\geq 1$ there is an explicit sequence $\cT_d$ of tight 
zero-one-valued tensors with $\sigma(\cT_d)=\sigma(\sT_d)$.
\end{Thm}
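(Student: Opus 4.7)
The plan is to aggregate, for each Kronecker-power index $q$, the support-localized tensors of Theorem~\ref{thm:main-localized} across the (finitely many) nonempty tight support patterns $\Delta\subseteq[d]\times[d]\times[d]$ into a single tight tensor $T_{d,q}$. Since $T\in\F^d\ot\F^d\ot\F^d$ is tight if and only if it admits a coordinate representation supported in some tight $\Delta$, and since only finitely many $\Delta\subseteq[d]\times[d]\times[d]$ exist, one has
\[
\sigma(\sT_d)=\max\bigl\{\sigma(\Delta) : \Delta\subseteq[d]\times[d]\times[d]\ \text{nonempty and tight}\bigr\}.
\]
For each such $\Delta$, Theorem~\ref{thm:main-localized} produces an explicit zero-one sequence $\cU_\Delta=(U_{\Delta,q})_{q\geq 1}$ with $\sigma(\cU_\Delta)=\sigma(\Delta)$, so the task reduces to combining the $\cU_\Delta$ into a tight sequence of the same exponent.

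I would first verify that tightness of $\Delta$ is inherited by each $U_{\Delta,q}$ from the explicit $\SS_q$-orbit-indicator description used in the construction of $\cU_\Delta$. Starting from injective $\alpha,\beta,\gamma\colon[d]\to\Z$ with $\alpha(i)+\beta(j)+\gamma(k)=0$ on $\Delta$, I would define lifts $\tilde\alpha,\tilde\beta,\tilde\gamma$ on the index sets of $U_{\Delta,q}$ via the $\SS_q$-invariant symmetric sum $\tilde\alpha(i_1,\ldots,i_q)=\sum_k\alpha(i_k)$ and likewise for $\beta,\gamma$. These lifts automatically satisfy $\tilde\alpha+\tilde\beta+\tilde\gamma=0$ on the lifted support, so the only remaining point is injectivity of $\tilde\alpha,\tilde\beta,\tilde\gamma$ on distinct $\SS_q$-orbits; this can be arranged by choosing $\alpha,\beta,\gamma$ generically, for instance by scaling along rapidly growing integer sequences while preserving the cancellation identity on $\Delta$, so that distinct multisets yield distinct sums in each factor.

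With each $U_{\Delta,q}$ tight, I would then set $T_{d,q}:=\bigoplus_{\Delta} U_{\Delta,q}$, the sum ranging over nonempty tight $\Delta$. The number of summands is a constant $C(d)$, so rank subadditivity of direct sums gives $\Ra(T_{d,q})\leq C(d)\max_\Delta\Ra(U_{\Delta,q})$, while restriction to any one block yields $\Ra(T_{d,q})\geq\Ra(U_{\Delta,q})$. The side lengths of the summands grow at comparable rates, so passage to logarithms gives $\sigma(\cT_d)=\max_\Delta\sigma(\cU_\Delta)=\sigma(\sT_d)$. Tightness of $T_{d,q}$ itself is then arranged by shifting each block's tightness witnesses by integer triples $(a_\Delta,b_\Delta,c_\Delta)$ with $a_\Delta+b_\Delta+c_\Delta=0$ into pairwise disjoint integer windows, and concatenating the shifted witnesses into a single injective global triple that vanishes on the support of the direct sum.

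I expect the second step to be the main obstacle: symmetric-sum lifts give the vanishing relation for free, but pinning down injectivity of $\tilde\alpha,\tilde\beta,\tilde\gamma$ across distinct $\SS_q$-orbits depends on the precise $\SS_q$-action used in the construction of Theorem~\ref{thm:main-localized} and may require careful quantitative choice of the base witnesses on $[d]$. If a generic-translate argument does not suffice, one fallback is to augment the construction of $\cU_\Delta$ so as to carry a tightness certificate through the orbit formation, at the cost of a somewhat more involved formula for $T_{d,q}$.
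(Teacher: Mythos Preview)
Your overall architecture matches the paper's: define $T_{d,q}=\bigoplus_{\Delta\ \text{tight}}U_{\Delta,q}$, use $\sigma(\sT_d)=\max_{\Delta\ \text{tight}}\sigma(\Delta)$ together with Theorem~\ref{thm:main-localized}, and compare ranks and side-lengths up to the constant $C(d)$ counting tight $\Delta$'s. That part is correct and essentially identical to the paper's argument.

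The gap is in your tightness step. Tightness of a tensor in $\F^{[d]^q}\otimes\F^{[d]^q}\otimes\F^{[d]^q}$ requires injective witnesses $\tilde\alpha,\tilde\beta,\tilde\gamma\colon[d]^q\to\Z$, i.e.\ injective on \emph{tuples}, not on $\SS_q$-orbits. Your symmetric-sum lift $\tilde\alpha(i_1,\ldots,i_q)=\sum_k\alpha(i_k)$ is by design $\SS_q$-invariant, so it collapses every nontrivial orbit and can never be injective on $[d]^q$ for $q\geq 2$; no ``generic'' choice of $\alpha$ repairs this, since the obstruction is the symmetry itself, not the values. What you call ``the main obstacle'' is therefore not a technicality but a wrong ansatz.

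The paper sidesteps this entirely. It observes that for any $T$ supported on a tight $\Delta$, the Kronecker power $T^{\boxtimes q}$ is tight (this is the standard fact, proved e.g.\ by the \emph{positional} lift $\tilde\alpha(i_1,\ldots,i_q)=\sum_k\alpha(i_k)M^{k-1}$ for $M$ larger than the range of $\alpha,\beta,\gamma$, which \emph{is} injective and still satisfies $\tilde\alpha+\tilde\beta+\tilde\gamma=0$ on the support). Since the support of each $T^{(g)}$ with $g\in\sC_q^\Delta$ is contained in the support of $T^{\boxtimes q}$, each $T^{(g)}$ is tight, hence so is $U_{\Delta,q}$, and finally $T_{d,q}$ is tight because direct sums of tight tensors are tight. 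Replacing your symmetric sum by this two-line support-containment argument (or by the positional lift) closes the gap; the rest of your proof then goes through as written.
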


Further, by diagonalising the sequences $\cU_d$ for 
increasing $d$ we obtain a universal sequence providing worst possible 
exponent irrespective of $d$; that is, a universal sequence
for the extended asymptotic rank conjecture 
(Conjecture~\ref{conj:Strassen-extended}):

\begin{Thm}[A universal sequence of tensors for the extended asymptotic rank conjecture] 
\label{thm:extended-universal}
There is an explicit sequence $\cD=(D_d:d=1,2,\ldots)$ of zero-one-valued tensors with $\lim_{d\rightarrow\infty} \sigma(D_d)=\lim_{d\rightarrow\infty}\sigma(d)$.
\end{Thm}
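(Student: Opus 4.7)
The plan is to prove the theorem in two stages: first, show that $\lim_{d\to\infty}\sigma(d)$ exists and equals $\sigma^*:=\sup_d\sigma(d)$; second, diagonalise through the universal sequences $\cU_d$ of Theorem~\ref{thm:main} to obtain an explicit sequence of individual tensors $D_d$ with $\sigma(D_d)\to\sigma^*$.

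\emph{Stage 1 (existence of the limit).} The quantity $\sigma^*$ is finite since $\sigma(d)\leq\frac{2\omega}{3}$ by Strassen's result recalled above, and the inequality $\limsup_d\sigma(d)\leq\sigma^*$ is immediate. For the matching $\liminf$, given $\epsilon>0$ choose $d_0$ and $T\in\F^{d_0}\otimes\F^{d_0}\otimes\F^{d_0}$ with $\sigma(T)>\sigma^*-\epsilon$. For each large $d$ set $k=\lfloor\log d/\log d_0\rfloor$, so that $d_0^k\leq d$, and view $T^{\boxtimes k}\in(\F^{d_0^k})^{\otimes 3}$ inside $(\F^d)^{\otimes 3}$ by padding with zero slices. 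Padding preserves rank and therefore asymptotic rank, so $\ARa(T^{\boxtimes k})=d_0^{k\sigma(T)}$ and the exponent of the padded tensor in the enclosing format satisfies
\[
\log_d\ARa(T^{\boxtimes k})=k\sigma(T)\cdot\frac{\log d_0}{\log d}\geq\sigma(T)\cdot\frac{k}{k+1},
\]
which tends to $\sigma(T)>\sigma^*-\epsilon$ as $d\to\infty$. Hence $\liminf_d\sigma(d)\geq\sigma^*-\epsilon$ for every $\epsilon>0$, and the limit exists and equals $\sigma^*$.

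\emph{Stage 2 (explicit diagonal sequence).} Write $U_d^{(q)}\in(\F^{s_q(d)})^{\otimes 3}$ for the $q$\textsuperscript{th} term of the sequence $\cU_d$ from Theorem~\ref{thm:main}; since $U_d^{(q)}$ arises from an $\SS_q$-action on $q$-tuples of $[d]$, we have $s_q(d)=d^q$. Set $D_d:=U_d^{(q(d))}$ for a slowly growing $q(d)\to\infty$; for concreteness take $q(d)=d$. Using $\sigma(\cU_d)=\sigma(d)$ together with $\ARa\leq\Ra$ gives the upper bound
\[
\sigma(D_d)\leq\log_{s_{q(d)}(d)}\Ra(U_d^{(q(d))})\leq\sigma(d)+o_{q(d)}(1).
\]
For the matching lower bound, the strategy is to exploit the combinatorial structure of $U_d^{(q)}$: the expected universality-by-restriction property is that for every $T\in\F^d\otimes\F^d\otimes\F^d$ the Kronecker power $T^{\boxtimes q}$ can be extracted from $U_d^{(q)}$ by a linear restriction on each of the three tensor legs. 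Granting this, monotonicity of asymptotic rank under restriction yields $\ARa(U_d^{(q)})\geq\ARa(T^{\boxtimes q})=\ARa(T)^q=d^{q\sigma(T)}$, hence $\sigma(U_d^{(q)})=\log_{d^q}\ARa(U_d^{(q)})\geq\sigma(T)$; taking the supremum over $T$ gives $\sigma(D_d)\geq\sigma(d)$.

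Combining the two bounds, $\sigma(D_d)=\sigma(d)+o_{q(d)}(1)$, and Stage 1 now lets us pass to the limit to conclude $\lim_d\sigma(D_d)=\lim_d\sigma(d)=\sigma^*$. The crux of the plan is the universality-by-restriction property of the orbit-indicator construction: this does not follow from the sequence-rank statement of Theorem~\ref{thm:main} alone, so it is the main obstacle and must be derived by unpacking the combinatorial description of $U_d^{(q)}$ as a union of $\SS_q$-orbit tensors.
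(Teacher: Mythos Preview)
Your Stage~1 is fine (and a pleasant alternative to the paper's direct-sum/degeneration argument in Lemma~\ref{lem:forbigthenforall}). The real gap is in Stage~2, and it stems from a miscomputation of the format of $U_d^{(q)}$.

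You assert $s_q(d)=d^q$ because each orbit-indicator $T^{(g)}$ lives in $(\F^{d^q})^{\otimes 3}$. But $U_d^{(q)}=\bigoplus_{g\in\sC_q^{[d]^3}}T^{(g)}$ is a \emph{direct sum} of $|\sC_q^{[d]^3}|$ such tensors, so its format is $s_q(d)=|\sC_q^{[d]^3}|\cdot d^q$. Your lower-bound computation then becomes
\[
\sigma(U_d^{(q)})=\frac{\log\ARa(U_d^{(q)})}{\log s_q(d)}\geq \sigma(T)\cdot\frac{q\log d}{\log|\sC_q^{[d]^3}|+q\log d},
\]
and with your choice $q(d)=d$ the correction factor does \emph{not} tend to~$1$: since $|\sC_d^{[d]^3}|=\binom{d^3-1+d}{d}\sim (ed^2)^d$, one gets $\log|\sC_d^{[d]^3}|\sim 2d\log d$, so the ratio tends to $\frac{1}{3}$. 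Thus $q(d)=d$ only yields $\liminf_d\sigma(D_d)\geq\sigma^*/3$, which is not enough. The paper takes $q(d)=d^4$ precisely to kill this term: then $\log|\sC_{d^4}^{[d]^3}|\leq d^3\log(2d^4)=o(d^4\log d)$, and the correction factor tends to~$1$. Any $q(d)$ growing faster than $d^3$ (up to logs) would do.

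Two smaller remarks. First, the ``universality-by-restriction'' you flag as the main obstacle is in fact immediate: a direct sum $\bigoplus_g T^{(g)}$ restricts to any linear combination $\sum_g\lambda_gT^{(g)}$ via the maps $(x_g)_g\mapsto\sum_g\lambda_g x_g$ on one factor and $(x_g)_g\mapsto\sum_g x_g$ on the other two, so $T^{\boxtimes q}=\sum_g S^gT^{(g)}$ is a restriction of $U_d^{(q)}$ without further work. Second, your upper bound $\sigma(D_d)\leq\sigma(d)+o_{q(d)}(1)$ invokes an $o_q(1)$ that is only guaranteed for \emph{fixed} $d$; fortunately the upper bound is trivial anyway, since $D_d$ is some tensor in some format $s$ and hence $\sigma(D_d)\leq\sigma(s)\leq\sigma^*$.
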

We note that the limit in the theorem above exists and is the supremum of 
all $\sigma(d)$; cf.~Lemma~\ref{lem:forbigthenforall}. The above theorem may be regarded as a solution to the aforementioned hay in the haystack problem for the exponent of tensors. 

\subsection{Overview of techniques}

Before proceeding to review our further results, it will be convenient to 
give an overview of our main techniques and concepts 
underlying Theorem~\ref{thm:main}. In particular, a basis induced from 
integer compositions for the linear span of the image of the 
Kronecker power map $S\mapsto S^{\boxtimes q}$ will be our key tool. 

\medskip
\noindent
{\em 1. The Kronecker power map\/ $\Kp_{d,q}$ in coordinates.}
For a field $\F$ and positive integers $d$ and $q$, our key object of study
is the {\em Kronecker power map}
\[
\Kp_{d,q}:\F^d\ot\F^d\ot\F^d\rightarrow \F^{d^q}\ot\F^{d^q}\ot\F^{d^q}
\]
that takes a tensor $S\in \F^d\ot\F^d\ot\F^d$ to its $q$\textsuperscript{th}
tensor (Kronecker) power $\Kp_{d,q}(S)=S^{\boxtimes q}$.

It will be convenient to study the Kronecker power map $\Kp_{d,q}$ by working 
with tensors in coordinates, so let us set up conventions accordingly. 
Let us write $[d]=\{1,2,\ldots,d\}$ and identify the spaces 
$\F^{d\times d\times d}=\F^d\otimes\F^d\otimes\F^d$.
For a tensor $S\in\F^{d\times d\times d}$ in the domain of $\Kp_{d,q}$,
we write $S_{i,j,k}\in\F$ for the entry of $S$ at position 
$i,j,k\in[d]$. 
For a tensor $T\in\F^{d^q\times d^q\times d^q}$ in the codomain of~$\Kp_{d,q}$, 
it is convenient to index the entries $T_{I,J,K}\in\F$ of $T$ with
$q$-tuples $I=(i_1,i_2,\ldots,i_q)\in [d]^q$,
$J=(j_1,j_2,\ldots,j_q)\in [d]^q$,
$K=(k_1,k_2,\ldots,k_q)\in [d]^q$. Indeed, with this convention,
the image $\Kp_{d,q}(S)=S^{\boxtimes q}$ of 
a tensor $S\in\F^{d\times d\times d}$ can be defined entrywise 
for all $I,J,K\in[d]^q$ by 
\begin{equation}
\label{eq:box-entry}
S^{\boxtimes q}_{I,J,K}=
S_{i_1,j_1,k_1}
S_{i_2,j_2,k_2}
\cdots
S_{i_q,j_q,k_q}\,.
\end{equation}

\medskip
\noindent
{\em 2. An explicit basis for the linear span of the image of\/ $\Kp_{d,q}$.}
Next we embed the image $\Kp_{d,q}(\F^{d\times d\times d})$
in the least-dimensional subspace of $\F^{d^q\times d^q\times d^q}$ 
possible, so-called linear span, and provide an explicit basis for this subspace in the
assumed coordinates. 
Towards this end, working with integer compositions that capture
the distinct right-hand sides of \eqref{eq:box-entry} for a generic
$S$ with support in a nonempty set $\Delta\subseteq[d]\times[d]\times[d]$ 
will be convenient. In precise terms, a function 
$g:\Delta\rightarrow\N_{\geq 0}$ with $\sum_{\delta\in\Delta}g(\delta)=q$ 
is a {\em composition} of the positive integer $q$ with domain~$\Delta$. 
Let us write $\sC^\Delta_q$ for the set of all compositions
of $q$ with domain~$\Delta$. 
The distinct right-hand sides 
of~\eqref{eq:box-entry} for a generic $S$ are now enumerated by the
compositions $g\in\sC^\Delta_q$; explicitly, associate 
with $g$ the product
\begin{equation}
\label{eq:s-g}
S^g=\prod_{i,j,k\in[d]}S_{i,j,k}^{g(i,j,k)}\in\F\,.
\end{equation}
Writing $\F^{\Delta}$ for the subspace of
all tensors in $\F^{d\times d\times d}$ with support in $\Delta$, we next
provide a basis 
$(T^{(g)}\in\F^{d^q\times d^q\times d^q}:g\in\sC^{\Delta}_q)$ 
that enables us to express 
the Kronecker power $S^{\boxtimes q}$ of an arbitrary tensor 
$S\in\F^{\Delta}$ as the linear combination
\begin{equation}
\label{eq:s-boxq-via-t-g}
S^{\boxtimes q}=\sum_{g\in\sC^{\Delta}_q}S^gT^{(g)}\,.
\end{equation}
We need short preliminaries to define the tensors $T^{(g)}$ in the assumed 
coordinates. For $I,J,K\in[d]^q$, define the triple-counting composition 
$\Phi_{I,J,K}\in\sC^{[d]\times [d]\times [d]}_q$ for all $u,v,w\in [d]$ by
\begin{equation}
\label{eq:phi}
\Phi_{I,J,K}(u,v,w)=|\{\ell\in [q]:i_\ell=u,\ j_\ell=v,\ k_\ell=w\}|\,.
\end{equation}
We use {\em Iverson's bracket notation}; for a logical proposition $P$, we write
$\iv{P}$ to indicate a $0$ if $P$ is false and a $1$ if $P$ is true.

\begin{Def}[Composition basis for the linear span of the image $\Kp_{d,q}(\F^\Delta)$]\label{def:Tg}
For $g\in \sC_q^{\Delta}$, define the tensor 
$T^{(g)}\in \F^{d^q\times d^q\times d^q}$ in coordinates 
for all $I,J,K\in [d]^q$ by
\begin{equation}
\label{eq:t-g}
\bigl(T^{(g)}\bigr)_{I,J,K}=\iv{\Phi_{I,J,K}=g}\,.
\end{equation}
The tensors $(T^{(g)}:g\in\sC_q^{\Delta})$ are
the {\em composition basis} for the linear span of $\Kp_{d,q}(\F^\Delta)$.
\end{Def}
We provide various equivalent characterisations of these tensors $T^{(g)}$ 
in Lemma~\ref{lem:kdp} (on page~\pageref{lem:kdp}). 
The characterisation via integer compositions 
in Definition~\ref{def:Tg} in particular
enables an immediate verification via \eqref{eq:t-g}, 
\eqref{eq:s-g}, and \eqref{eq:box-entry} that \eqref{eq:s-boxq-via-t-g} holds.
In particular, the linear span of the image $\Kp_{d,q}(\F^\Delta)$ is 
contained in the linear span of the composition basis. What is less immediate 
is that the reverse 
containment holds under mild assumptions on the field $\F$ by identifying 
the Kronecker power map with the Veronese map on homogeneous $\Delta$-variate 
polynomials of degree $q$. In coordinates, by relying on homogeneous polynomial
interpolation we show in Corollary \ref{cor:tgbasisofL} and Proposition \ref{prop:T^ginimage} 
that for every $g\in\sC_q^\Delta$ there exist
tensors $S_f\in\F^{\Delta}$ and scalars $\lambda_{f,g}\in\F$ indexed by 
$f\in\sC_q^\Delta$ such that
\begin{equation}
\label{eq:t-g-via-s-boxq}
T^{(g)}=\sum_{f\in\sC_q^\Delta}\lambda_{f,g}S_{\!f}^{\boxtimes q}\,.
\end{equation}
From \eqref{eq:s-boxq-via-t-g} and \eqref{eq:t-g-via-s-boxq} we have that the composition basis spans exactly the linear span of $\Kp_{d,q}(\F^\Delta)$.

\medskip
\noindent 
{\em 3. Sublinearity and serendipity of polynomial dimensionality of the span.}
By subadditivity of tensor rank, from the linear 
combination~\eqref{eq:s-boxq-via-t-g} we have immediately for an
arbitrary tensor $S\in\F^{\Delta}$ that 
\[
\Ra\bigl(S^{\boxtimes q}\bigr)\leq
\sum_{g\in\sC^\Delta_q}\Ra\bigl(T^{(g)}\bigr)\,.
\]
Conversely, for an arbitrary $g\in\sC^\Delta_q$, we have 
from~\eqref{eq:t-g-via-s-boxq} that
\[
\Ra\bigl(T^{(g)}\bigr)\leq
\sum_{f\in\sC^\Delta_q}\Ra\bigl(S_f^{\boxtimes q}\bigr)\,.
\]
Recalling from combinatorics of integer compositions that 
$|\sC^\Delta_q|=\binom{|\Delta|-1+q}{|\Delta|-1}\leq (|\Delta|-1+q)^{|\Delta|-1}$, we observe that dimension of the linear span of $\Kp_{d,q}(\F^\Delta)$ 
grows only polynomially in $q$ when $\Delta$ is fixed.
Theorem~\ref{thm:main} and Theorem~\ref{thm:main-localized} follow essentially
immediately by taking $\cU_\Delta=(U_{\Delta,q}:q=1,2,\ldots)$ with
$U_{\Delta,q}=\bigoplus_{g\in\sC_q^\Delta}T^{(g)}$ as the universal sequence.

\medskip
\noindent
{\em Structure and description of the tensors $T^{(g)}$.}
The tensors $T^{(g)}$ have several group-theoretic descriptions, e.g.~as orbit-indicators of the group $\SS_q$, see Lemma \ref{lem:kdp}. There are several important applications of group theory in the study or tensor rank and asymptotic rank. We note that when $G$ is an Abelian group then the structure tensor $S_G$ of the group algebra is an orbit-indicator of the group $G^2$ identified with $\{(g_1,g_2,g_3)\in G^3: g_1+g_2=g_3\}$; in this case, the tensor $S_G$ has also minimal rank, equal to $|G|$, via the Discrete Fourier Transform. When $G$ is not Abelian, the representation theory of $G$ allows for nontrivial bounds on the rank of $S_G$; similar observations have been leveraged in the study of fast matrix multiplication; e.g.~Cohn and Umans~\cite{CohnU2003}, Cohn, Kleinberg, Szegedy, and Umans~\cite{CohnKSU2005}, Cohn and Umans~\cite{CohnU2012}, Blasiak, Cohn, Grochow, Pratt, and Umans~\cite{BlasiakCGPU2023}. We expect this structure to enable further work towards an eventual proof or disproof of 
the (extended) asymptotic rank conjecture. 

\medskip
\noindent
{\em A win-win dichotomy.}
In addition to enabling worst-case characterisation of tensor exponents for 
families of tensors, the tensors $T^{(g)}$ enable the following 
``win-win'' dichotomy (cf.~Corollary~\ref{cor:dichotomy} for a precise 
statement) that motivates their further study:\\[\medskipamount]
\hspace*{3mm}
\begin{tabular}{r@{\ \ }p{14cm}}
{\em Either} & the extended asymptotic rank conjecture 
(Conjecture~\ref{conj:Strassen-extended}) holds,
implying $\omega=2$ and a disproof of the Set Cover Conjecture;\\
{\em or} & the tensors $T^{(g)}$ form an explicit 
sequence of tensors with superlinear border rank growth, providing 
substantial progress for the ``hay in the haystack'' problem for rank 
and border rank. 
\end{tabular}

\medskip
\noindent
{\em A remark on explicitness.}
We stress that we here view explicitness as the property of not only a single tensor but rather a sequence of tensors; cf.~e.g.~\cite[Section 3]{landsberg2019towards}. Both the tensors $T^{(g)}$ and the tensors in our universal sequences have zero-one entries in coordinates, and these entries may be computed fast. Indeed, from \eqref{eq:t-g} it is immediate that we have linear-time algorithms that output $T_{I,J,K}^{(g)}$ given $I,J,K,g$ as input. Similarly, listing (or ranking/unranking) integer compositions $g\in\sC^{\Delta}_q$ for given $\Delta,q$ admit fast algorithms.

\subsection{Further results}
\label{sect:further-results}

Our second result relates extended Strassen's conjecture to equations of 
varieties, in particular secant varieties. We tacitly assume sufficient 
background in algebraic geometry and geometry of 
tensors~(e.g.~\cite{CoxLO2015,Landsberg2012, michalek2021invitation}).
Also, unless otherwise mentioned, we assume that the field $\F$ is the field
of complex numbers. 

To set the context, it is a well-established method to prove that tensors 
have high border rank by exhibiting polynomials vanishing on the $k$\textsuperscript{th} secant 
variety of the Segre variety $X=(\PP^{d-1})^{\times 3}$ and evaluating it 
on a given tensor $T$. 
Among the state of the art theoretical equations in this context
are the Koszul flattenings 
and the Young flattenings, which can provide border rank bounds up 
$(2-\epsilon)d$ for any $\epsilon>0$ for $d\gg 0$. 
The degree of those equations grows as a polynomial in $d$ for 
fixed $\epsilon$, however the degree of this polynomial also grows 
as $\epsilon\rightarrow 0$. 
The second method to obtain equations of secant varieties is computational, 
based on representation theory and linear algebra. 
This is an exhaustive method, finding all equations in the given degree and partitioning them into so-called isotypic components \cite{breiding2022equations, hauenstein2013equations}. 
Still of course its scope is limited due to computational obstacles.  
For border rank $k<d$, the smallest degree of a polynomial vanishing on 
the $k$\textsuperscript{th} secant variety is exactly $k+1$ \cite{landsberg2004ideals}. 
However, for border rank above $d$, one observes fast grow of the minimal 
degree in which equations exist; for example, the smallest degree of an 
equation vanishing on the $6$\textsuperscript{th} secant variety 
of $(\PP^3)^{\times 3}$ is $19$ and on the $18$\textsuperscript{th} secant variety 
of $(\PP^6)^{\times 3}$ is at least $187 000$ \cite{hauenstein2013equations}. 
The lack of low-degree (or otherwise easy) equations has been perceived so 
far as an obstacle in proving that tensors have high border rank, 
in particular in disproving Strassen's conjecture or its extensions. 
In this paper, we proceed in the {\em opposite} direction, namely that 
{\em absence} of low-degree equations of secant varieties implies upper bound on $\sigma(d)$.

\begin{Thm}[Absence of low-degree equations implies low asymptotic rank]\label{thm:noequlowrank}
Let $X\subseteq \PP((\F^d)^{\otimes 3})$ be a variety contained in the locus of tensors of asymptotic rank at most $r$.
Suppose that no polynomial of degree $p$ vanishes on $X$. Then every tensor in $(\F^d)^{\otimes 3}$ has asymptotic rank at most
\[
r\binom{d^3-1+p}{d^3-1}^{\frac{1}{p}}.
\]
\end{Thm}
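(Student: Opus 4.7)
The plan is to exploit the standard duality between degree-$p$ equations of $X$ and the linear span of its $p$-th Veronese image, and then to apply sub-additivity and power-multiplicativity of asymptotic rank. Write $V=(\F^d)^{\otimes 3}$, set $N=\binom{d^3-1+p}{d^3-1}=\dim\mathrm{Sym}^p(V)$, and let $\hat X\subseteq V$ denote the affine cone over $X$. The hypothesis that no polynomial of degree $p$ vanishes on $X$ is the statement that the ideal component $I(X)_p\subseteq\mathrm{Sym}^p(V^*)$ is zero, which via the duality $\mathrm{Sym}^p(V)^*\cong\mathrm{Sym}^p(V^*)$ is equivalent to saying that the linear span of $\{T^{\otimes p}:T\in\hat X\}$ is all of $\mathrm{Sym}^p(V)$.

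From this I would extract a finite basis: pick $T_1,\ldots,T_N\in\hat X$ so that $(T_i^{\otimes p})_{i=1}^N$ is a basis of $\mathrm{Sym}^p(V)$. For any $T\in V$, the symmetric tensor $T^{\otimes p}$ then expands uniquely as $T^{\otimes p}=\sum_{i=1}^N c_i\,T_i^{\otimes p}$ for some scalars $c_i\in\F$. Transporting along the canonical rearrangement of tensor factors that identifies $V^{\otimes p}\cong(\F^{d^p})^{\otimes 3}$ and sends the $p$-th tensor power of a $3$-tensor into its $p$-th Kronecker power, this becomes the identity
\[
T^{\boxtimes p}=\sum_{i=1}^N c_i\,T_i^{\boxtimes p}\qquad\text{in }(\F^{d^p})^{\otimes 3}.
\]

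Now I would invoke two standard properties of asymptotic rank: power-multiplicativity $\ARa(S^{\boxtimes p})=\ARa(S)^p$, and sub-additivity $\ARa(A+B)\le\ARa(A)+\ARa(B)$ for tensors of the same format. Since each $T_i\in\hat X$ satisfies $\ARa(T_i)\le r$ and $\ARa(cS)=\ARa(S)$ for $c\ne 0$, we obtain
\[
\ARa(T)^p=\ARa(T^{\boxtimes p})\le\sum_{i=1}^N\ARa(c_iT_i^{\boxtimes p})=\sum_{i=1}^N\ARa(T_i)^p\le Nr^p,
\]
and taking $p$-th roots yields the claimed bound $\ARa(T)\le r\binom{d^3-1+p}{d^3-1}^{1/p}$.

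The only ingredient that is not immediately formal is sub-additivity of asymptotic rank. I would establish it by expanding $(A+B)^{\boxtimes n}=\sum_{S\subseteq[n]}A^{\boxtimes S}\boxtimes B^{\boxtimes [n]\setminus S}$, using $\Ra(X\boxtimes Y)\le\Ra(X)\Ra(Y)$ together with $\Ra(A^{\boxtimes k})\le(\ARa(A)+\epsilon)^k$ for $k$ large, so that the binomial theorem gives $\Ra((A+B)^{\boxtimes n})\le(\ARa(A)+\ARa(B)+O(\epsilon))^n$; taking $n$-th roots and sending $\epsilon\to 0$ gives the claim. Everything else in the argument is bookkeeping around the projective--affine correspondence and the tensor-versus-Kronecker identification.
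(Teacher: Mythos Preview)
Your argument is correct and is essentially the paper's own proof. The paper packages the same idea through its Kronecker-power map $K_{d,p}$ and the invariant space $L_{d,p}$ (Proposition~\ref{prop:setspanning} and Corollary~\ref{cor:noeqthenlowrankimplicit}): your $\mathrm{Sym}^p(V)$ is precisely $L_{d,p}$ under the factor-rearrangement isomorphism you describe, and your ``no degree-$p$ equation $\Leftrightarrow$ Veronese image spans'' is their Proposition~\ref{prop:setspanning}; the paper then invokes subadditivity and power-multiplicativity of asymptotic rank exactly as you do, citing these as known rather than sketching the binomial-expansion proof.
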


We note that the theorem above implies that bounds on asymptotic rank of special tensors may imply bounds for all tensors. As the variety $X$ may be always assumed to be $GL(d)^{\times 3}$ invariant one may combine the computational method of finding isotypic decomposition of homogeneous polynomials with obtaining good bounds on $p$, in order to obtain new bounds on $\sigma(d)$. We leave this line of research for the future. 

\subsection{Related work}

It is known that computing the tensor rank of a given  tensor is NP-hard~\cite{Hastad1990}; see also \cite{HillarL2013,Raz2013,ZhaoWZ2019} for pertinent hardness results. It is also difficult in practice to determine the rank and border rank of small tensors; for example, the rank, border rank and border support rank of $\MM_2$ are known to be seven~\cite{BlaserCZ2018, hauenstein2013equations, Landsberg2006, Strassen1969}. 

There has been extensive interest in constructing explicit tensors of high rank or border rank \cite{alexeev2011tensor, landsberg2015nontriviality, landsberg2019towards, landsberg2018lower}. This study is motivated by the need for new methods to provide lower complexity bounds. 
Currently, we do not know how to construct sequences of explicit tensors $T\in \F^d\otimes\F^d\otimes \F^d$ of rank or border rank above $3d$. In fact, there are no known examples of tensors with entries $0$ or $1$ and rank or border rank greater than $3d$. The only method to construct such tensors, is by making the entries incomparable in size, i.e.~each entry is of different order of magnitude then other entries, or making them algebraically independent, which makes the tensors very far from explicit. However, it is easy to prove that tensors of super-linear border rank, with respect to their size exist. Precisely, for any constant $C>0$ and $d$ sufficiently large there exists a tensor $T\in (\F^d)^{\otimes 3}$ of border rank greater than $Cd$. Even more: general tensors will have greater border rank than $Cd$ for $d$ sufficiently large and the growth of maximal border rank is quadratic in $d$. We simply do not know how to provide explicit examples of such tensors, as the methods we have do not allow to prove that particular tensors have large rank. These obstructions are related to the fact that the cactus variety, that contains the secant variety, fills the whole ambient space, while most of the equations we know for secant varieties, also vanish on cactus varieties \cite{buczynska2014secant, galkazka2023multigraded, galkazka2017vector, iarrobino1999power, landsberg2013equations, teitler2014geometric}. We even do not know how to provide a sequence of explicit tensors so that infinitely many elements of the sequence would have high rank, say above $3d$.
 
The families of tensors with fixed support are also studied. An important concept is that of support rank~\cite{CohnU2012}, which has given rise to other support-based algorithm-design techniques (e.g.~\cite{AlmanZ2023, KarppaK2019}). Further, bounding asymptotic rank of special tensors was recently tied to NP-hard problems; in particular, Strassen's conjecture (Conjecture~\ref{conj:Strassen}) would imply unexpected (but not polynomial) upper bounds on complexity of (randomized) algorithms for NP-hard problems~\cite{BjorklundK2023,Pratt2023}.

\subsection{Organization of this paper}
\label{sect:organization}

Section~\ref{sect:preliminaries} reviews notational preliminaries and 
definitions.
Section~\ref{sect:universal-sequences} studies the composition basis and proves
our main theorems on explicit universal sequences of tensors
(Theorems~\ref{thm:main}~to~\ref{thm:extended-universal}). 
Section~\ref{sect:secant-equations} relates the equations of varieties
to bounds on asymptotic rank and proves Theorem~\ref{thm:noequlowrank}.

\section{Preliminaries}

\label{sect:preliminaries}

Recall that we write $[d]=\{1,2,\ldots,d\}$. 
The set $[d]^q$ consists of sequences of length $q$ of integers in $[d]$. We fix the canonical basis
$e_1,e_2,\ldots,e_m\in\F^m$.

In this article, we exclusively work with tensors of format $a\times b\times c$, where in most cases $a=b=c$. These are elements of the vector space $\F^a\otimes\F^b\otimes \F^c\simeq \F^{a\times b\times c}$. In analogy to the case of matrices, the reader may freely think about tensors as three dimensional arrays filled with elements of $\F$. For a vector $v\in \F^a$ we write $v_i:=e_i^*(v)\in \F$ for $1\leq i\leq a$. We write $S_{i,j,k}\in\F$ for the entry of $S\in \F^a\otimes\F^b\otimes \F^c$ at position 
$(i,j,k)\in[a]\times[b]\times[c]$.

For three vectors $v_1\in\F^a, v_2\in \F^b$ and $v_3\in \F^c$ we define the tensor $v_1\ot v_2\ot v_3\in \F^a\otimes\F^b\otimes \F^c$ where the $(i,j,k)$ coordinate equals $(v_1)_i (v_2)_j (v_3)_k$. Tensors of this form are called \emph{rank one} tensors. The \emph{rank} of a tensor $T$ is the smallest $r$ such that $T$ is sum of $r$ rank one tensors.

In case $\F=\R$ or $\F=\C$ we define the \emph{border rank} of a tensor $T$ as the smallest $r$ such that in any neighbourhood of $T$ there exists a tensor of rank $r$. For more details about rank and border rank we refer to \cite{burgisser2013algebraic, Landsberg2012, michalek2021invitation}. The Kronecker power of a tensor is defined as in formula \eqref{eq:box-entry}. \emph{Asymptotic rank}~\cite{Gartenberg1985} of a tensor $T$ is defined 
as $\lim_{n\rightarrow\infty} \Ra(T^{\boxtimes n})^{\frac{1}{n}}$. 
We write $\langle s\rangle$ to be the unit tensor $\langle s\rangle=\sum_{i=1}^s e_i\ot e_i\ot e_i$.

\section{Universal sequences of tensors for the asymptotic rank conjecture}

\label{sect:universal-sequences}

In this section we prove that the tensors $T^{(g)}$ in the composition basis
(Definition~\ref{def:Tg})
form a universal family for the asymptotic rank conjecture. 
We start with a more detailed analysis of the structure of the composition 
basis for fixed $d$ (Section~\ref{sect:comp-properties}), and 
follow with an analysis for increasing~$d$ (Section~\ref{sect:limit-exponent});
in particular, we establish the existence of the limit 
exponent $\lim_{d\rightarrow\infty}\sigma(d)$. 
Theorem~\ref{thm:main}~and~Theorem~\ref{thm:extended-universal} are 
restated and proved next (Section~\ref{sect:universal-seq-proofs}).
Using techniques of Strassen, we then show that the universal sequences
have nontrivially low tensor rank (Section~\ref{sect:upper-bounds-on-rank}).
We end this section by setting up our techniques for support-localization
(Section~\ref{sect:support-local}) as well as restate and prove 
Theorem~\ref{thm:tight-universal}, our main result for tight tensors 
(Section~\ref{sect:tight}).

\subsection{Properties of the composition basis}
\label{sect:comp-properties}

We start with the following easy lemma that provides various alternative
characterisations of tensors $T^{(g)}$ in Definition~\ref{def:Tg}.

We note that the symmetric group $\SS_q$ acts by permutations 
on $\F^{[d]^q}$ and diagonally on $(\F^{[d]^q})^{\otimes 3}$.

\begin{Lem}[Equivalent definitions of the composition basis]\label{lem:kdp}
We have the following equivalent definitions of tensors $T^{(g)}$ for any $g\in\sC_q^{[d]^3}$.
\begin{enumerate}
\item The tensors $T^{(g)}$ are precisely sums of $\SS_q$ orbits of canonical basis tensors in $(\F^{[d]^q})^{\otimes 3}$.
\item The tensors $T^{(g)}$ are coefficients of monomials for the Kronecker map $K_{d,q}$, explicitly:
\[
K_{d,q}(S)=\sum_{g\in\sC_q^{[d]^3}}S^g T^{(g)}\,.
\]
In particular, the linear span of the image is contained in the linear span of tensors $T^{(g)}$.
\item If\/ $|\F|>q$, then the tensors $T^{(g)}$, up to scaling by a constant, are precisely tensors with inclusion minimal supports in the linear span of the image of $K_{d,q}$. 
\end{enumerate}
\end{Lem}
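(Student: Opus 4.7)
My plan is to handle (1) by interpreting $\Phi_{I,J,K}$ as the complete invariant of the diagonal $\SS_q$-action on index triples, derive (2) from a routine entrywise computation, and deduce (3) by combining (1) with the reverse containment established elsewhere.

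For (1), I would observe that the diagonal $\SS_q$-action sends $e_I\otimes e_J\otimes e_K$ to $e_{\pi(I)}\otimes e_{\pi(J)}\otimes e_{\pi(K)}$, and that $\Phi_{I,J,K}$ depends only on the multiset $\{(i_\ell,j_\ell,k_\ell):\ell\in[q]\}$, hence is invariant under this action. Conversely, two triples $(I,J,K)$ and $(I',J',K')$ with the same composition determine the same multiset of positional triples, so a simultaneous permutation $\pi\in\SS_q$ aligns them. Thus $\{(I,J,K):\Phi_{I,J,K}=g\}$ is precisely one $\SS_q$-orbit and $T^{(g)}$ in~\eqref{eq:t-g} is the indicator tensor of this orbit. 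For (2), I would compute from~\eqref{eq:box-entry} that $S^{\boxtimes q}_{I,J,K}=\prod_\ell S_{i_\ell,j_\ell,k_\ell}$ equals $S^{\Phi_{I,J,K}}$ by grouping equal factors; partitioning the expansion $K_{d,q}(S)=\sum_{I,J,K}S^{\boxtimes q}_{I,J,K}\,e_I\otimes e_J\otimes e_K$ according to $g=\Phi_{I,J,K}$ then yields the displayed identity, and the containment of the linear span of the image in the span of the $T^{(g)}$ follows immediately.

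For (3), the critical input is the reverse containment~\eqref{eq:t-g-via-s-boxq}, cited from Corollary~\ref{cor:tgbasisofL} and Proposition~\ref{prop:T^ginimage}, which under the hypothesis $|\F|>q$ ensures that every $T^{(g)}$ lies in the linear span of the image of $K_{d,q}$. Granting this, the span of the image equals the span of the $T^{(g)}$. By (1), the tensors $T^{(g)}$ have pairwise disjoint supports (distinct compositions index distinct $\SS_q$-orbits), so the support of any linear combination $\sum_g c_g T^{(g)}$ is the disjoint union of the supports $\mathrm{supp}(T^{(g)})$ over indices $g$ with $c_g\neq 0$; inclusion-minimal support therefore forces all but one coefficient to vanish, yielding a scalar multiple of a single $T^{(g)}$.

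\emph{Main obstacle.} The only nontrivial ingredient is the reverse containment feeding into (3); parts (1) and (2) amount to bookkeeping of the $\SS_q$-action and an evaluation of Kronecker-power entries. The condition $|\F|>q$ is precisely what is needed for the cited polynomial interpolation step that places each $T^{(g)}$ in the linear span of $q$th Kronecker powers, and removing or relaxing this condition is the only place where the argument would require additional care.
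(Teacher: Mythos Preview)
Your proposal is correct and follows essentially the same approach as the paper's own proof: part~(1) via the observation that $\Phi_{I,J,K}=g$ determines $(I,J,K)$ up to the diagonal $\SS_q$-action, part~(2) via the entrywise identity $S^{\boxtimes q}_{I,J,K}=S^{\Phi_{I,J,K}}$, and part~(3) by invoking Proposition~\ref{prop:T^ginimage} for the reverse containment together with the disjointness of supports. Your write-up is slightly more explicit than the paper's (particularly in spelling out why disjoint supports force inclusion-minimal elements to be scalar multiples of a single $T^{(g)}$), but the underlying argument is the same.
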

\begin{proof}
We note that the condition $\Phi_{I,J,K}=g$ from Definition \ref{def:Tg} determines the triple $(I,J,K)$ up to simultaneous action by an element of $\SS_q$. This proves equivalence of point $(1)$ with the original definition. 

The coordinates of $K_{d,p}(S)$ in the canonical basis are monomials of degree $q$ in the coordinates of $S$. A monomial $S^g$ appears exactly on the entries indexed by such $(I,J,K)$'s that $\Phi_{I,J,K}=g$. This proves equivalence of point $(2)$ with Definition \ref{def:Tg}.

In Proposition \ref{prop:T^ginimage}, we show that the linear span of the image of $K_{d,q}$ coincides with the linear span of $T^{(g)}$'s for $g\in \sC_q^{[d]^3}$ and $|\F|>q$. As $T^{(g)}$'s have disjoint supports, we obtain point (3).
\end{proof}

\begin{Rem}[Invariant subspaces defined by marginals of $g$] It is immediate that $T^{(g)}\in (\F^{[d]^q})^{\otimes 3}$, however the tensor $T^{(g)}$ also belongs to smaller invariant subspace defined by $g$. Namely, for $g\in \sC_q^{[d]^3}$ let $g_1:[d]\rightarrow \N$ be the first marginal of $g$; that is, let $g_1(j)=\sum_{a,b\in[d]} g(j,a,b)$ for all $j\in[d]$. In the same way, define the second marginal $g_2$ and third marginal $g_3$. For $i=1,2,3$, let $U_i\subseteq[d]^q$ be the set of all $q$-tuples $I\in [d]^q$ such that the value $j$ appears in $I$ exactly $g_i(j)$ times for all $j\in [d]$. We have $|U_i|=\binom{q}{g_i(1),g_i(2),\dots,g_i(d)}$ as well as $T^{(g)}\in \F^{U_1}\otimes\F^{U_2}\otimes \F^{U_3}$. Clearly, the ambient space is $\SS_q$ invariant. 
\end{Rem}

\begin{Exa}[The small Coppersmith--Winograd tensor]
Let $d=2$, $q=3$ and $g$ assign value $1$ on $(0,0,1)$, $(0,1,0)$ and $(1,0,0)$. The tensor $T^{(g)}$ is the small Coppersmith--Winograd tensor in $(\F^3)^{\otimes 3}$, that is:
\[
e_0\otimes e_1\otimes e_2+e_0\otimes e_2\otimes e_1+e_1\otimes e_0\otimes e_2+e_1\otimes e_2\otimes e_0+e_2\otimes e_0\otimes e_1+e_2\otimes e_1\otimes e_0\,.
\]
\end{Exa}

\begin{Def}[Linear span of the composition basis]
Let $L_{d,q}$ be the linear span of the tensors $T^{(g)}$ for $g\in\sC_q^{[d]^3}$. As the tensors $T^{(g)}$ have disjoint supports they are linearly independent and hence form a basis of $L_{d,q}$. 
We note that $\dim L_{d,q}=\binom{d^3+q-1}{q}$ which is the cardinality of $\sC_q^{[d]^3}$.
\end{Def}
\begin{Cor}[Span coincides with the space of invariants]\label{cor:tgbasisofL}
The tensors $T^{(g)}$ for $g\in \sC_q^{[d]^3} $ form a basis of the invariants space $((\F^{[d]^q})^{\otimes 3})^{\SS_q}$, which coincides with $L_{d,q}$. Up to rescaling, it is the unique basis of that space made of tensors with disjoint supports. 
\end{Cor}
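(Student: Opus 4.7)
The plan is to deduce the corollary from Lemma~\ref{lem:kdp}(1), which identifies each $T^{(g)}$ with a sum over a single $\SS_q$-orbit of canonical basis tensors. The key structural fact I will use is that the diagonal $\SS_q$-action on $(\F^{[d]^q})^{\otimes 3}$ permutes the canonical basis $\{e_I\otimes e_J\otimes e_K\}$, so the invariant space has a standard basis given by orbit sums, and these orbit sums have pairwise disjoint supports.

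First I would observe that the assignment $(I,J,K)\mapsto \Phi_{I,J,K}$ of \eqref{eq:phi} is constant on $\SS_q$-orbits and conversely separates them: two triples $(I,J,K)$ and $(I',J',K')$ are related by a simultaneous permutation of $[q]$ if and only if $\Phi_{I,J,K}=\Phi_{I',J',K'}$. Thus $g\mapsto\{(I,J,K):\Phi_{I,J,K}=g\}$ is a bijection between $\sC_q^{[d]^3}$ and the set of $\SS_q$-orbits on $[d]^q\times[d]^q\times[d]^q$, and by Lemma~\ref{lem:kdp}(1) the tensor $T^{(g)}$ is the $(0/1)$-indicator of the corresponding orbit. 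In particular the $T^{(g)}$ are $\SS_q$-invariant, nonzero, and have pairwise disjoint supports, hence are linearly independent. Since any $\SS_q$-invariant tensor must be constant on each orbit of canonical basis vectors, every invariant is uniquely a linear combination of orbit sums, so the $T^{(g)}$'s span $((\F^{[d]^q})^{\otimes 3})^{\SS_q}$ and form a basis. Because $L_{d,q}$ is defined to be the linear span of the $T^{(g)}$'s, the coincidence $L_{d,q}=((\F^{[d]^q})^{\otimes 3})^{\SS_q}$ is then immediate.

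For the uniqueness statement, I would argue as follows. Let $N=|\sC_q^{[d]^3}|=\dim L_{d,q}$, and suppose $\{B_\alpha:\alpha\in A\}$ is a basis of the invariants with pairwise disjoint supports. Expand each $B_\alpha$ in the $T^{(g)}$-basis: $B_\alpha=\sum_{g}c_{\alpha,g}T^{(g)}$. Since the $T^{(g)}$'s have disjoint supports, the support of $B_\alpha$ equals the union of the orbits indexed by $\{g:c_{\alpha,g}\neq 0\}$, so disjointness of supports of the $B_\alpha$'s translates into disjointness of the supports of the coefficient vectors $(c_{\alpha,g})_g\in\F^{N}$. These coefficient vectors form an invertible $N\times N$ matrix (since $\{B_\alpha\}$ is a basis of an $N$-dimensional space), so $|A|=N$ and the vectors are pairwise disjointly supported and nonzero; this forces each coefficient vector to be supported at a single composition $g_\alpha$, giving $B_\alpha=c_{\alpha,g_\alpha}T^{(g_\alpha)}$.

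There is no serious obstacle here: the argument is essentially the standard fact that orbit sums form a basis of the invariants of a permutation representation, packaged via the combinatorial bijection between orbits and compositions supplied by $\Phi$. The only point requiring minor care is the uniqueness part, where one must rule out basis elements whose support is a nontrivial union of orbits; this is handled by the linear-algebra observation above that $N$ disjointly supported vectors in $\F^N$ spanning $\F^N$ must each be supported at a single coordinate.
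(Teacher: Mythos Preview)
Your proposal is correct and follows the route the paper intends: the corollary is stated without proof precisely because it is meant to follow from Lemma~\ref{lem:kdp}(1) together with the standard fact that orbit sums of a permutation representation form a basis of the invariants. Your write-up supplies exactly these details, and your pigeonhole argument for the uniqueness claim (that $N$ nonzero, pairwise disjointly supported vectors spanning $\F^N$ must each be supported at a single coordinate) is the right way to handle the only part that is not entirely automatic.
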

From now on we will assume that the cardinality of the field is greater than $q$.
\begin{Lem}[Dual space of homogeneous polynomials]\label{lem:lintopoly}
There is an isomorphism of vector spaces: $L_{d,q}^*$ of linear forms and functions on $(\F^d)^{\ot 3}$ given by homogeneous polynomials of degree $q$. The isomorphism sends a linear form $l$ to the polynomial function $l\circ K_{d,q}$. 
\end{Lem}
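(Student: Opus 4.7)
The plan is to write $\Psi(l) = l \circ K_{d,q}$ and exhibit $\Psi$ as an isomorphism by reading off the composition basis of $L_{d,q}$ against the monomial basis of homogeneous polynomials of degree $q$ in the $d^3$ coordinates $S_{i,j,k}$. Equation \eqref{eq:s-boxq-via-t-g}, that is
\[
K_{d,q}(S)=\sum_{g\in\sC_q^{[d]^3}} S^g\,T^{(g)},
\]
is the single identity that drives the entire argument: since each $T^{(g)}$ lies in $L_{d,q}$, the image $K_{d,q}(S)$ does too, so composing with $l\in L_{d,q}^*$ is well-defined and gives
\[
\Psi(l)(S)=\sum_{g\in\sC_q^{[d]^3}} l(T^{(g)})\,S^g.
\]
Each $S^g$ is a monomial of degree exactly $q$ in the $d^3$ variables $S_{i,j,k}$, so $\Psi(l)$ is (the polynomial function defined by) a homogeneous polynomial of degree $q$. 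Linearity of $\Psi$ in $l$ is immediate.

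For injectivity, suppose $\Psi(l)=0$ as a function on $(\F^d)^{\otimes 3}$. Since $|\F|>q$ (and in fact $\F$ is assumed infinite), the polynomial identity lemma implies that the formal polynomial $\sum_g l(T^{(g)})\,S^g$ is zero, hence every coefficient vanishes: $l(T^{(g)})=0$ for all $g\in\sC_q^{[d]^3}$. By Corollary~\ref{cor:tgbasisofL} the tensors $T^{(g)}$ are a basis of $L_{d,q}$, so $l=0$.

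For surjectivity I would argue either by dimension or by direct construction. Direct construction is cleanest: given a homogeneous polynomial $p=\sum_g a_g\,S^g$ of degree $q$, define $l\in L_{d,q}^*$ on the composition basis by $l(T^{(g)})=a_g$ and extend linearly; then $\Psi(l)=p$ by the displayed formula above. Alternatively, $\dim L_{d,q}^*=|\sC_q^{[d]^3}|=\binom{d^3+q-1}{q}=\dim\mathrm{Sym}^q\bigl(((\F^d)^{\otimes 3})^*\bigr)$, so injectivity forces surjectivity.

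The only subtle point — the one I would flag as the main obstacle if the paper had stated the result without the standing hypothesis $|\F|>q$ — is identifying formal homogeneous polynomials with polynomial functions; this is precisely what the hypothesis $|\F|>q$ (or $\F$ infinite) buys via polynomial identity, and it is also what was used in Proposition~\ref{prop:T^ginimage} to pin down $L_{d,q}$ as the linear span of the image of $K_{d,q}$ in the first place. Everything else is a direct bookkeeping consequence of \eqref{eq:s-boxq-via-t-g} and the linear independence of the monomials $S^g$.
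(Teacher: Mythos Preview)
Your proof is correct and follows essentially the same approach as the paper's: both compute $l\circ K_{d,q}$ via the expansion $K_{d,q}(S)=\sum_g S^g T^{(g)}$ to get $\sum_g l(T^{(g)})S^g$, read off surjectivity from the dual basis (the paper's $T^{(g)*}$), and use the polynomial identity lemma under $|\F|>q$ for injectivity. One small remark: your closing comment references Proposition~\ref{prop:T^ginimage} as if it were already established, but in the paper that proposition is proved \emph{using} this lemma, so be careful not to suggest a circular dependency.
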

\begin{proof}
Let $T^{(g)*}$ be the basis of $L_{d,q}^*$ dual to $T^{(g)}$.
The image of the linear function $\sum \lambda_g  T^{(g)*}$ is $\sum \lambda_g S^g$. Surjectivity is obvious. 

By induction on the number $n$ of variables, one proves that no polynomial of degree $q$ may vanish identically on $\F^n$, when $|\F|>q$. 
Hence, no linear form $l$ is mapped to the zero function and the linear map $l\mapsto l\circ K_{d,p}$ is injective. This finishes the proof. 
\end{proof}
\begin{Prop}[Span of the image of the Kronecker power map]\label{prop:T^ginimage}
The space $L_{d,q}$ is the linear span of the image of $K_{d,q}$.
\end{Prop}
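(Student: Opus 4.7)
The plan is to deduce Proposition~\ref{prop:T^ginimage} from Lemma~\ref{lem:lintopoly} by a short duality argument. Let $V$ denote the linear span of the image of $K_{d,q}$. One containment is already in hand: Lemma~\ref{lem:kdp}(2) gives the expansion $K_{d,q}(S)=\sum_{g\in\sC_q^{[d]^3}} S^g\,T^{(g)}$ for every $S\in(\F^d)^{\otimes 3}$, so every $K_{d,q}(S)$—and hence $V$—lies in $L_{d,q}$. The content of the proposition is the reverse containment $L_{d,q}\subseteq V$.

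For this I would argue by contraposition inside the finite-dimensional space $L_{d,q}$. If $V\subsetneq L_{d,q}$, then there exists a nonzero linear functional $l\in L_{d,q}^*$ that vanishes on $V$. In particular $l(K_{d,q}(S))=0$ for every $S\in(\F^d)^{\otimes 3}$, which is to say that the polynomial function $l\circ K_{d,q}\colon(\F^d)^{\otimes 3}\to\F$ is identically zero. Concretely, writing $l=\sum_g \lambda_g T^{(g)*}$ in the dual basis supplied by the proof of Lemma~\ref{lem:lintopoly}, the composition $l\circ K_{d,q}$ equals the polynomial $S\mapsto\sum_g \lambda_g S^g$, so the hypothesis forces this polynomial to vanish as a function.

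At this point Lemma~\ref{lem:lintopoly} closes the argument: the map $l\mapsto l\circ K_{d,q}$ is an isomorphism from $L_{d,q}^*$ onto the space of degree-$q$ homogeneous polynomial functions on $(\F^d)^{\otimes 3}$, and in particular is injective, so $l=0$, contradicting the choice of $l$. Hence $V=L_{d,q}$, as desired.

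The only substantive input here is the injectivity statement in Lemma~\ref{lem:lintopoly}, which in turn rests on the standing hypothesis $|\F|>q$ and the classical fact that a nonzero homogeneous polynomial of degree $q$ in finitely many variables cannot vanish identically on $\F^n$ (proved by induction on $n$, as in the lemma's own proof). Given that ingredient, nothing else in the argument is more than bookkeeping, so I do not anticipate any real obstacle beyond verifying that the hypothesis on $|\F|$ is indeed in force throughout this section.
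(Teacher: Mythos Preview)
Your argument is correct and is essentially identical to the paper's own proof: one containment from Lemma~\ref{lem:kdp}(2), the other by observing that a strict containment would yield a nonzero $l\in L_{d,q}^*$ with $l\circ K_{d,q}=0$, contradicting the injectivity in Lemma~\ref{lem:lintopoly}. The standing hypothesis $|\F|>q$ is indeed assumed just before Lemma~\ref{lem:lintopoly}, so there is no gap.
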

\begin{proof}
Clearly $L_{d,q}$ contains the image of $K_{d,q}$. If the containment was strict, there would exist a nonzero linear function $l\in L_{d,q}$ vanishing on the image. Then $l\circ K_{d,q}=0$. This is not possible by Lemma \ref{lem:lintopoly}.
\end{proof}
\begin{Rem}[The Veronese map of degree $q$]
Up to isomorphism, $K_{d,q}$ may be identified with the $q$\textsuperscript{th} Veronese map, that is a map defined by all degree $q$ monomials. However, in coordinates each monomial may appear more than once, as each monomial represented by $g\in \sC_q^{[d]^3}$ appears on the support of $T^{(g)}$. 
\end{Rem} 

\begin{Lem}[Maximum rank in the composition basis controls rank in $L_{d,q}$]\label{lem:estimate_from_Tg}
Let $r$ be the maximum rank (respectively, asymptotic rank) of $T^{(g)}$ over $g\in\sC_q^{[d]^3}$.
Every tensor in $L_{d,q}$ has rank (respectively, asymptotic rank) at most
\[
r|\sC_q^{[d]^3}|=r\binom{d^3-1+q}{d^3-1}\,.
\]
In particular, every tensor in $(\F^d)^{\ot 3}$ has asymptotic rank at most 
\[
\left(r\binom{d^3-1+q}{d^3-1}\right)^{\frac{1}{q}}\,.
\]
\end{Lem}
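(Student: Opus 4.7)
The proof plan is to combine subadditivity of (asymptotic) rank under addition with Proposition~\ref{prop:T^ginimage}. For the rank assertion, I would fix an arbitrary $T\in L_{d,q}$ and expand it in the composition basis as $T=\sum_{g\in\sC_q^{[d]^3}}\lambda_g T^{(g)}$. Subadditivity of rank under addition and its invariance under nonzero scaling then immediately yield
\[
\Ra(T)\leq\sum_{g\in\sC_q^{[d]^3}} \Ra(T^{(g)})\leq r\,|\sC_q^{[d]^3}|,
\]
and $|\sC_q^{[d]^3}|=\binom{d^3-1+q}{d^3-1}$ is the standard stars-and-bars count for nonnegative-integer compositions of $q$ into $d^3$ parts.

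For the asymptotic-rank version of the same inequality, the same expansion works provided asymptotic rank is itself subadditive, i.e.\ $\ARa(A+B)\leq\ARa(A)+\ARa(B)$. To justify this, I would expand $(A+B)^{\boxtimes n}$ multilinearly and observe that each of the $2^n$ summands is, up to a reordering of Kronecker factors (which does not change rank), of the form $A^{\boxtimes k}\boxtimes B^{\boxtimes (n-k)}$. Using submultiplicativity of rank under $\boxtimes$ and grouping by $k$, one gets
\[
\Ra\bigl((A+B)^{\boxtimes n}\bigr)\leq\sum_{k=0}^n\binom{n}{k}\Ra(A^{\boxtimes k})\,\Ra(B^{\boxtimes (n-k)}),
\]
and taking $n$-th roots with $\Ra(A^{\boxtimes k})^{1/k}\to\ARa(A)$ (similarly for $B$) yields subadditivity of $\ARa$. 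Iterating over the expansion $T=\sum_g \lambda_g T^{(g)}$ then delivers $\ARa(T)\leq r\,|\sC_q^{[d]^3}|$.

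For the final \emph{in particular} clause, I would observe that every $S\in(\F^d)^{\otimes 3}$ satisfies $S^{\boxtimes q}\in L_{d,q}$ by Proposition~\ref{prop:T^ginimage}, so the previous bound gives $\ARa(S^{\boxtimes q})\leq r\binom{d^3-1+q}{d^3-1}$. Combining this with the identity $\ARa(S^{\boxtimes q})=\ARa(S)^q$, which is immediate from the definition of asymptotic rank applied to the subsequence $(S^{\boxtimes qp})_{p\geq 1}$ of $(S^{\boxtimes n})_{n\geq 1}$, and taking $q$-th roots, yields the claim. The only step requiring real care is the subadditivity of asymptotic rank; the remaining ingredients are routine consequences of structural results already established in this section.
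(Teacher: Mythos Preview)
Your proposal is correct and follows essentially the same route as the paper: expand in the composition basis, use subadditivity of (asymptotic) rank, and then pass from $S^{\boxtimes q}\in L_{d,q}$ to a bound on $\ARa(S)$ via $\ARa(S^{\boxtimes q})=\ARa(S)^q$. The paper's proof is terser---it invokes Lemma~\ref{lem:kdp}(2) for the expansion and simply cites subadditivity and submultiplicativity of asymptotic rank from \cite{WigdersonZ2023} rather than deriving them---whereas you supply an inline binomial-expansion argument for subadditivity and cite Proposition~\ref{prop:T^ginimage} (either reference suffices for the containment $K_{d,q}(S)\in L_{d,q}$, which is the easy direction).
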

\begin{proof}
Fix $T \in (\F^d)^{\ot 3}$.
By Lemma \ref{lem:kdp} the tensor $K_{d,q}(T)$ is a linear combination of $T^{(g)}$'s for $g\in \sC_q^{[d]^3}$. As $|\sC_q^{[d]^3}|=\binom{d^3-1+q}{d^3-1}$ we obtain:
\[
\Ra (K_{d,q}(T))\leq r\binom{d^3-1+q}{d^3-1}\,.
\]
The statement under the assumption of asymptotic rank $r$ for $T^{(g)}$'s is proved in the same way, noting that asymptotic rank is subadditive and submultiplicative (e.g.~\cite{WigdersonZ2023}).
\end{proof}

\subsection{The extended asymptotic rank conjecture and the limit exponent}
\label{sect:limit-exponent}

We are now ready to connect the composition-basis tensors $T^{(g)}$ 
to the extended
asymptotic rank conjecture (Conjecture~\ref{conj:Strassen-extended}).

\begin{Cor}[The composition basis suffices for the extended asymptotic rank conjecture]
If there exists an infinite set $S=\{d_1,d_2,\ldots\}\subseteq \Z_{\geq 1}$ such that for any $d_i$ there exist infinitely many $q_{i,j}\in\Z_{\geq 1}$ such that Conjecture~\ref{conj:Strassen-extended} holds for the tensor $T^{(g)}$ for all $g\in\sC_{q_{i,j}}^{[d_i]^3}$, then Conjecture~\ref{conj:Strassen-extended} holds for all tensors. 
\end{Cor}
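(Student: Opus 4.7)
The plan is to feed the hypothesis into Lemma~\ref{lem:estimate_from_Tg} to deduce $\sigma(d_i)=1$ for each $d_i\in S$, and then to promote the conclusion from the infinite set $S$ to every positive integer $d$ by a Kronecker-power-plus-zero-padding argument.

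First I would fix $d_i\in S$. For each admissible $q=q_{i,j}$, the hypothesis gives $\sigma(T^{(g)})\leq 1$ for every $g\in\sC_q^{[d_i]^3}$, where $T^{(g)}$ is viewed in its native ambient space $(\F^{d_i^q})^{\otimes 3}$; equivalently $\ARa(T^{(g)})\leq d_i^q$. Plugging $r=d_i^q$ into Lemma~\ref{lem:estimate_from_Tg} gives, for every $T\in (\F^{d_i})^{\otimes 3}$,
\[
\ARa(T)\;\leq\; d_i\cdot \binom{d_i^3-1+q}{d_i^3-1}^{1/q}.
\]
Since the hypothesis supplies infinitely many admissible $q$, I may let $q\to\infty$ along $(q_{i,j})_j$; with $d_i$ fixed, the binomial coefficient is a polynomial in $q$ of degree $d_i^3-1$, so its $q$-th root tends to $1$. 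Hence $\ARa(T)\leq d_i$ for every $T\in(\F^{d_i})^{\otimes 3}$, i.e., $\sigma(d_i)=1$.

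Next I would extend this to an arbitrary $d$. Fix $d$ and $T\in(\F^d)^{\otimes 3}$. For any $k\geq 1$ the Kronecker power $T^{\boxtimes k}$ lies in $(\F^{d^k})^{\otimes 3}$ with $\ARa(T^{\boxtimes k})=\ARa(T)^k$. Pick $d_i\in S$ with $d_i\geq d^k$ and embed $T^{\boxtimes k}$ into $(\F^{d_i})^{\otimes 3}$ by zero-padding, which preserves asymptotic rank. Applying $\sigma(d_i)=1$ from the first half then yields $\ARa(T)^k=\ARa(T^{\boxtimes k})\leq d_i$. Choosing $k=\lfloor\log_d d_i\rfloor$ so that $d_i<d^{k+1}$, I obtain
\[
\ARa(T)\;\leq\; d_i^{1/k}\;<\; d\cdot d^{1/k}.
\]
Since $S$ is infinite, I may send $d_i\to\infty$, forcing $k\to\infty$ and $d^{1/k}\to 1$; hence $\ARa(T)\leq d$, i.e., $\sigma(T)\leq 1$, as required by Conjecture~\ref{conj:Strassen-extended}.

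The main obstacle I anticipate is the extension step, because a priori $\sigma(d)$ need not be monotone in $d$ and the set $S$ may be arbitrarily sparse. The Kronecker-power-plus-padding trick resolves this without any density hypothesis on $S$, trading one factor of $d$ against the slack $d^{1/k}\to 1$ inherent in the bound $d_i^{1/k}$; alternatively, once $\sigma(d_i)=1$ is established for arbitrarily large $d_i$, one can invoke the cited Lemma~\ref{lem:forbigthenforall} (giving $\lim_{d\to\infty}\sigma(d)=\sup_d\sigma(d)$) to conclude immediately.
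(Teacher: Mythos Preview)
Your proposal is correct and follows essentially the same route as the paper: the first step (feeding $r=d_i^q$ into Lemma~\ref{lem:estimate_from_Tg} and letting $q\to\infty$ along the admissible $q_{i,j}$) is identical, and for the second step the paper simply invokes Lemma~\ref{lem:forbigthenforall}, which you note as an alternative. Your explicit Kronecker-power-plus-padding argument is a valid self-contained substitute for that citation---indeed somewhat more elementary than the degeneration argument used in the proof of Lemma~\ref{lem:forbigthenforall}---but it is not a genuinely different approach, just an unpacking of the same idea.
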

\begin{proof}
First fix $d_i$. By Lemma \ref{lem:estimate_from_Tg} we see that any $T\in ((\F^{d_i})^{\ot 3})$ has asymptotic rank at most: 
\[
(d_i^q)^{\frac{1}{q}}\binom{d_i^3-1+q}{d_i^3-1}^{\frac{1}{q}}\,.
\]
Note that for fixed $d_i$ we have 
\[
\lim_{q\rightarrow\infty} \binom{d_i^3-1+q}{d_i^3-1}^{\frac{1}{q}}=1\,.
\]
This confirms Conjecture \ref{conj:Strassen-extended} for any $T\in (\F^{d_i})^{\ot 3}$. We conclude that it must hold for each $d$ from the next Lemma \ref{lem:forbigthenforall}.
\end{proof}
\begin{Lem}[Existence of the limit exponent]\label{lem:forbigthenforall}
The limit\/ $\lim_{d\rightarrow \infty} \sigma(d)$ exists and equals the supremum of the set\/ $\{\sigma(d):d\in\Z_{\geq 1}\}$.
\end{Lem}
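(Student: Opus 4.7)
The plan is to observe that while $\sigma(d)$ itself need not be monotone in $d$ (because of the normalization by $\log d$ built into its definition), the absolute quantity $R(d):=d^{\sigma(d)}$---the supremum of asymptotic ranks of tensors in $(\F^d)^{\otimes 3}$---is both nondecreasing and supermultiplicative along powers. Together these two properties will pin down the limit.

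First I would verify monotonicity of $R$: for $d\leq d'$, every tensor $T\in(\F^d)^{\otimes 3}$ embeds as a tensor in $(\F^{d'})^{\otimes 3}$ by zero-padding on the extra coordinates. This embedding commutes with the Kronecker power and preserves tensor rank at every power, so $\ARa$ of the embedded tensor equals $\ARa(T)$; taking the supremum over $T$ yields $R(d')\geq R(d)$. Next I would verify supermultiplicativity along powers: for any $T\in(\F^d)^{\otimes 3}$ the identity $\ARa(T^{\boxtimes k})=\ARa(T)^k$ is immediate from the definition of asymptotic rank, and applying it to a sequence of tensors whose asymptotic ranks approach $R(d)$ gives $R(d^k)\geq R(d)^k$.

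With $M:=\sup_{d\geq 1}\sigma(d)\leq 2$ (finite by the trivial upper bound), I would fix an arbitrary $d_0\geq 2$ and, for each integer $n\geq d_0$, set $k:=\lfloor\log n/\log d_0\rfloor$, so that $d_0^k\leq n<d_0^{k+1}$. Combining the two properties above gives $R(n)\geq R(d_0^k)\geq R(d_0)^k$, which, on taking logarithms and dividing by $\log n$, yields
\[
\sigma(n)\;\geq\;\sigma(d_0)\cdot\frac{k\log d_0}{\log n}\;\geq\;\sigma(d_0)\cdot\frac{k}{k+1}\,.
\]
Keeping $d_0$ fixed and letting $n\to\infty$ forces $k\to\infty$, so $\liminf_{n\to\infty}\sigma(n)\geq\sigma(d_0)$; taking the supremum over $d_0$ then produces $\liminf_{n\to\infty}\sigma(n)\geq M$. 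Since $\sigma(n)\leq M$ holds by definition of $M$, the $\liminf$ and $\limsup$ coincide, so the limit exists and equals $M$.

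The only real subtlety is resisting the temptation to argue that $\sigma(d)$ is itself monotone---it need not be---and instead tracking $\log R(d)=\sigma(d)\log d$, which \emph{is} monotone and supermultiplicative along powers. The factor $k\log d_0/\log n$ appearing in the estimate is precisely what reintroduces the correct normalization to recover $\sigma(n)$.
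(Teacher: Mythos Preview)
Your proof is correct and takes a genuinely different route from the paper's. The paper argues by contradiction: assuming infinitely many $d_i$ with $\sigma(d_i)\leq\sigma(d_0)-\epsilon$, it considers the \emph{direct sum} $T^{\oplus k_i}\in(\F^{k_id_0})^{\otimes 3}$ for a near-extremal $T\in(\F^{d_0})^{\otimes 3}$, rewrites it as $\langle k_i\rangle\boxtimes T$, and then uses a degeneration of the unit tensor $\langle k_i\rangle$ to a large Kronecker power of $T$ to squeeze out a contradiction in the limit $k_i\to\infty$. Your argument instead tracks the unnormalized quantity $R(d)=d^{\sigma(d)}=\sup_T\ARa(T)$, observes directly that it is nondecreasing (zero-padding) and satisfies $R(d^k)\geq R(d)^k$ (Kronecker powers), and sandwiches an arbitrary $n$ between consecutive powers $d_0^k\leq n<d_0^{k+1}$ to obtain $\sigma(n)\geq\sigma(d_0)\cdot k/(k+1)$.

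Your approach is more elementary: it avoids degenerations and direct sums entirely, relying only on the identity $\ARa(T^{\boxtimes k})=\ARa(T)^k$ and the trivial embedding. It also yields the result by a direct $\liminf$ estimate rather than by contradiction. The paper's route, on the other hand, stays closer to the additive scale (passing from $d_0$ to $k d_0$ rather than $d_0^k$), which is perhaps more natural if one is already thinking in terms of the semiring structure and Strassen's degeneration preorder used elsewhere in the paper. One minor point: you restrict to $d_0\geq 2$, but this is harmless since $\sigma(1)=1\leq\sigma(n)$ for all $n$, so the case $d_0=1$ contributes nothing to the supremum.
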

\begin{proof}
As the sequence $\sigma(d)$ is bounded it is enough to prove:
\[
\forall_{d_0\in\Z_{\geq 1}}\forall_{\epsilon>0}\exists_{D\in\Z_{\geq 1}}\forall_{d>D}\quad \sigma(d)>\sigma(d_0)-\epsilon\,.
\]

Fix $\epsilon>0$ and $d_0\in\Z_{\geq 1}$. For contradiction assume there are infinitely many $d_i$ such that $\sigma(d_i)\leq \sigma(d_0)-\epsilon$. Hence, we also obtain infinitely many $k_i$ such that $k_id_0\leq d_i< (k_i+1)d_0$.

Fix $T\in (\F^{d_0})^{\otimes 3}$ with $\sigma(T)>\sigma(d_0)-\epsilon/2$.
The direct sum $T^{\bigoplus k_i}$ of $k_i$ tensors $T$ has asymptotic rank at most $d_i^{\sigma(d_0)-\epsilon}<(d_0(k_i+1))^{\sigma(d_0)-\epsilon}$. 
Let $b=d_0^{\sigma(T)}$ be the asymptotic rank of $T$. 
For $\delta>0$ and $k_i\gg 0$ we may degenerate the unit tensor $\langle k_i\rangle$ to $T^{\boxtimes \log_{b-\delta}k_i}$ and $T^{\oplus k_i}=\langle k_i\rangle\boxtimes T$ to $T^{\boxtimes 1+\log_{b-\delta}k_i}$. 
This implies $b^{1+\log_{b-\delta}k_i}\leq (d(k_i+1))^{\sigma(d_0)-\epsilon}$. Taking the limit as $k_i\rightarrow\infty$ we obtain a contradiction with $b\geq d_0^{\sigma(d_0)-\epsilon/2}$.   
\end{proof}
The following more precise result shows that the exponent governing the asymptotic rank is completely determined via asymptotic ranks of the tensors $T^{(g)}$.
\begin{Cor}[The composition bases govern tensor exponents]
For all $\tau\geq 0$ it holds that 
each tensor $T^{(g)}\in (\F^m)^{\otimes 3}$ (where $m$ is uniquely determined by $g$) has asymptotic rank at most $m^\tau$ if and only if any tensor $T\in (\F^d)^{\otimes 3}$ (where $d$ is arbitrary) has asymptotic rank at most $d^\tau$. In particular, for any tensor $T$ we have $\sigma(T)\leq \sup_g \sigma(T^{(g)})$.
\end{Cor}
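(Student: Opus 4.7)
The plan is to apply Lemma~\ref{lem:estimate_from_Tg} essentially verbatim; the corollary is a clean repackaging of that bound, with the composition-basis tensors $T^{(g)}$ playing the role of universal witnesses for the exponent. The backward implication of the ``if and only if'' is immediate: $T^{(g)}$ is itself a tensor in $(\F^m)^{\otimes 3}$ for $m=d^q$ (both $d$ and $q=\sum_\delta g(\delta)$ being determined by $g$), so the general hypothesis yields $\tilde R(T^{(g)})\leq m^\tau$ on the nose.

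For the forward direction I would fix an arbitrary tensor $T\in (\F^d)^{\otimes 3}$ and invoke Lemma~\ref{lem:estimate_from_Tg} for each $q\geq 1$: writing $r_q=\max_{g\in \sC_q^{[d]^3}}\tilde R(T^{(g)})$, the lemma gives
\[
\tilde R(T)\leq \bigl(r_q\cdot |\sC_q^{[d]^3}|\bigr)^{1/q}.
\]
The hypothesis applied to each $T^{(g)}\in (\F^{d^q})^{\otimes 3}$ gives $r_q\leq (d^q)^\tau=d^{q\tau}$, hence
\[
\tilde R(T)\leq d^\tau\cdot \binom{d^3-1+q}{d^3-1}^{1/q}.
\]
Since the binomial coefficient grows only polynomially in $q$ while $d$ is held fixed, the right factor tends to $1$ as $q\rightarrow\infty$, and so $\tilde R(T)\leq d^\tau$, equivalently $\sigma(T)\leq \tau$.

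For the ``in particular'' claim, set $\tau^{*}=\sup_g\sigma(T^{(g)})$. For every $g$, denoting by $m$ its associated ambient dimension, the definition of $\sigma$ gives $\tilde R(T^{(g)})\leq m^{\tau^{*}}$. Applying the forward direction with $\tau=\tau^{*}$ concludes $\sigma(T)\leq \tau^{*}$ for every tensor $T$. There is no real obstacle beyond Lemma~\ref{lem:estimate_from_Tg}, which already combines the expansion \eqref{eq:s-boxq-via-t-g} with subadditivity and submultiplicativity of asymptotic rank; the one item worth checking carefully is the legitimacy of the limit $q\rightarrow\infty$, but this is immediate from polynomial-in-$q$ growth of $|\sC_q^{[d]^3}|$ for fixed $d$.
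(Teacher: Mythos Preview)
Your proof is correct and follows essentially the same approach as the paper: both directions rest on Lemma~\ref{lem:estimate_from_Tg} together with the polynomial-in-$q$ growth of $|\sC_q^{[d]^3}|$, the only cosmetic difference being that the paper phrases the forward implication as a contradiction with an auxiliary $\epsilon$ while you take the limit $q\to\infty$ directly. Your explicit handling of the ``in particular'' clause is a welcome addition, and your identification $m=d^q$ is the intended one.
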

\begin{proof}
The implication ``$\Leftarrow$'' is immediate. For the implication ``$\Rightarrow$'', assume that a tensor $T$ has asymptotic rank greater than $d^{\tau+\epsilon}$ for fixed $\epsilon >0$. By applying Lemma \ref{lem:estimate_from_Tg} for $q\gg 0$ such that $\binom{d^3-1+q}{d^3-1}^{\frac{1}{q}}<d^\epsilon$ we obtain a contradiction.
\end{proof}

We note that there are countably many tensors $T^{(g)}$ and they form a sequence of explicit tensors. This sequence gives rise to the following dichotomy.

\begin{Cor}[A dichotomy on improved algorithms or explicit rank lower bounds]
\label{cor:dichotomy}
We have the following dichotomy. Either
\begin{enumerate}
\item[(i)] the extended asymptotic rank conjecture (Conjecture~\ref{conj:Strassen-extended}) holds; that is, for all tensors $T$ we have $\sigma(T)\leq 1$, which implies $\omega=2$ and disproves the Set Cover Conjecture; or
\item[(ii)] the tensors $T^{(g)}$ form a sequence of explicit tensors such that for any constant $C\in \Z_{\geq 1}$ there exist infinitely many $T^{(g)}\in (\F^m)^{\otimes 3}$, such that rank, border rank, and asymptotic rank is greater than $Cm$.
\end{enumerate}
\end{Cor}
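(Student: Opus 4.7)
The plan is to establish the dichotomy by proving the contrapositive: $\neg$(ii)~$\Rightarrow$~(i). The key tool is Lemma~\ref{lem:estimate_from_Tg}, which bounds the asymptotic rank of every tensor in $(\F^d)^{\otimes 3}$ in terms of the maximum asymptotic rank among the composition basis $T^{(g)}$ for $g\in\sC_q^{[d]^3}$.

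First I would rephrase $\neg$(ii). Since rank $\geq$ border rank $\geq \ARa$, requiring all three to exceed $Cm$ is equivalent to $\ARa(T^{(g)})>Cm$ alone, so $\neg$(ii) asserts: there exists $C\in\Z_{\geq 1}$ such that only finitely many $T^{(g)}\in(\F^m)^{\otimes 3}$ satisfy $\ARa(T^{(g)})>Cm$. For any \emph{fixed} $d$ this finiteness yields a threshold $q_0=q_0(d,C)$ beyond which every $g\in\sC_q^{[d]^3}$ satisfies $\ARa(T^{(g)})\leq Cd^q$.

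Next I would fix arbitrary $\epsilon>0$ and $d\geq 1$ and pick $q\geq q_0$ large enough that simultaneously (a)~$C\leq d^{q\epsilon}$, whence $Cd^q\leq d^{q(1+\epsilon)}$, and (b)~$\binom{d^3-1+q}{d^3-1}^{1/q}<d^\epsilon$, which is possible since the binomial coefficient is polynomial in $q$. Feeding the bound $r\leq d^{q(1+\epsilon)}$ on the maximum asymptotic rank among $\{T^{(g)}:g\in\sC_q^{[d]^3}\}$ into Lemma~\ref{lem:estimate_from_Tg}, every $T\in(\F^d)^{\otimes 3}$ satisfies
\[
\ARa(T)\leq \left(d^{q(1+\epsilon)}\binom{d^3-1+q}{d^3-1}\right)^{1/q}< d^{1+2\epsilon}.
\]
Since $\epsilon>0$ and $d$ are arbitrary, $\sigma(T)\leq 1$ for every tensor $T$, which is precisely Conjecture~\ref{conj:Strassen-extended}. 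The further algorithmic consequences ($\omega=2$ and the refutation of the Set Cover Conjecture) are invoked from the literature already cited in the introduction and need no new proof here.

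No substantial obstacle is anticipated. The one point requiring care is the handling of ``only finitely many exceptions'' across the infinite index set $\{(d,q,g):g\in\sC_q^{[d]^3}\}$, which is resolved by the observation that for each fixed $d$ only finitely many exceptions can lie in $\bigcup_q\sC_q^{[d]^3}$, so they can be avoided by choosing $q$ sufficiently large relative to $d$. Beyond this, the argument is a quantitative repackaging of the immediately preceding corollary on composition bases governing tensor exponents.
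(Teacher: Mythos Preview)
Your proposal is correct and matches the paper's intended approach: the paper states this corollary without explicit proof, treating it as immediate from Lemma~\ref{lem:estimate_from_Tg} and the preceding corollary on composition bases governing tensor exponents, and your contrapositive argument via Lemma~\ref{lem:estimate_from_Tg} is exactly the natural way to spell out those details. The only cosmetic point is that the inequality $C\leq d^{q\epsilon}$ tacitly assumes $d\geq 2$, but the case $d=1$ is trivial.
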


It is clear that for special $g$ the rank of $T^{(g)}$ can be small. For example if $g$ is supported at one element, then $T^{(g)}$ has support of cardinality one and thus is of rank one.

\subsection{Explicit universal sequences and main results}
\label{sect:universal-seq-proofs}

We first prove our main theorem for fixed~$d$, and define 
the universal sequence accordingly. 

\begin{Def}[Universal sequence for a fixed $d$]\label{def:Ud}
For a fixed positive integer $d$, we define the sequence 
$\cU_d=(U_{d,q}:q=1,2,\ldots)$ of tensors, where we set
\[
U_{d,q}=\bigoplus_{g\in \sC_{q}^{[d]^3} } T^{(g)}\,.
\]
\end{Def}
We are now ready to prove our main theorem; Theorem~\ref{thm:main} is 
restated below for convenience. 

\begin{Thm}[Main; Explicit universal sequences of tensors]\label{thm:mainintext}
For all $d\geq 1$ we have $\sigma(\cU_d)=\sigma(d)$. 
\end{Thm}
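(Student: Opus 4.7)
The plan is to verify the two opposing inequalities $\sigma(\cU_d)\le\sigma(d)$ and $\sigma(\cU_d)\ge\sigma(d)$, with the composition basis tensors $T^{(g)}$ serving as the bridge in both directions. Throughout I write $s_q=d^q\cdot|\sC_q^{[d]^3}|$ for the common side length of the three modes of $U_{d,q}=\bigoplus_g T^{(g)}$, and note that $\log s_q=q\log d+O(\log q)$ because $|\sC_q^{[d]^3}|=\binom{d^3-1+q}{d^3-1}$ is only polynomial in $q$ for $d$ fixed.

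For the upper bound, I would invoke Proposition~\ref{prop:T^ginimage} (equivalently, expression~\eqref{eq:t-g-via-s-boxq}) to write each $T^{(g)}$ as a linear combination of at most $|\sC_q^{[d]^3}|$ Kronecker powers $S_f^{\boxtimes q}$ of tensors $S_f\in\F^d\ot\F^d\ot\F^d$. By the definition of $\sigma(d)$, every such $S_f^{\boxtimes q}$ has rank at most $d^{\sigma(d)q+o(q)}$, so subadditivity of rank yields $\Ra(T^{(g)})\le d^{\sigma(d)q+o(q)}$ (the polynomial factor being absorbed into $o(q)$). A further summation over $g$ by subadditivity under direct sum gives $\Ra(U_{d,q})\le d^{\sigma(d)q+o(q)}$, which, after dividing the logarithm by $\log s_q$, becomes $\Ra(U_{d,q})\le s_q^{\sigma(d)+o_q(1)}$ and hence $\sigma(\cU_d)\le\sigma(d)$.

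For the lower bound, I would use the composition basis in the opposite direction: expansion~\eqref{eq:s-boxq-via-t-g} realises every Kronecker power $S^{\boxtimes q}$ with $S\in\F^d\ot\F^d\ot\F^d$ as the linear combination $\sum_g S^g\,T^{(g)}$ of the direct summands of $U_{d,q}$. Any such linear combination is a restriction of the direct sum---apply in each of the three modes the linear map $\bigoplus_g\F^{d^q}\to\F^{d^q}$ whose per-summand scalars multiply to $S^g$ across the three modes---and restrictions do not increase tensor rank, whence $\Ra(S^{\boxtimes q})\le\Ra(U_{d,q})$. Since $\log\Ra(S^{\boxtimes q})=\sigma(S)q\log d+o(q)$ and $\log s_q=q\log d+O(\log q)$, dividing and letting $q\to\infty$ yields $\sigma(\cU_d)\ge\sigma(S)$; taking the supremum over $S\in\F^d\ot\F^d\ot\F^d$ completes $\sigma(\cU_d)\ge\sigma(d)$.

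The main point requiring care is bookkeeping rather than a genuine obstacle: one must verify that the polynomial multiplicity $|\sC_q^{[d]^3}|$ appearing both in the basis count and in the side length contributes only to the $o(1)$ term in the $s_q$-scaling of both bounds. This is exactly why $s_q$ has been set to include this factor---its logarithm $O(\log q)$ is dominated by $q\log d$, so the natural $d^{\sigma\cdot q}$ scaling for Kronecker powers translates cleanly into the required $s_q^{\sigma+o(1)}$ scaling of the sequence exponent, and all the substantive ingredients are already established in Section~\ref{sect:comp-properties}.
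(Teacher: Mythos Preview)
Your proof is correct and follows the same two-inequality strategy as the paper: Proposition~\ref{prop:T^ginimage} for $\sigma(\cU_d)\le\sigma(d)$ and the expansion~\eqref{eq:s-boxq-via-t-g} for $\sigma(\cU_d)\ge\sigma(d)$, with the polynomial factor $|\sC_q^{[d]^3}|$ absorbed into the $o(1)$ in both directions. Your lower bound is marginally more direct---you restrict $U_{d,q}$ to $S^{\boxtimes q}$ in one step via the scalar-per-summand maps, whereas the paper first bounds each $\Ra(T^{(g)})$ by $\Ra(U_{d,q})$ and then invokes Lemma~\ref{lem:estimate_from_Tg}---but the substance is identical.
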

\begin{proof}
We prove the equality by proving both inequalities. 

First, suppose that $\sigma(\cU_{d})< \sigma$ for some constant $\sigma$. Fix an $\epsilon>0$ so that for $q$ sufficiently large $\Ra(U_{d,q}) < (|\sC_{q}^{[d]^3}|d^q)^{\sigma-\epsilon}$. Then, for $q$ sufficiently large and all $g\in \sC_{q}^{[d]^3}$ we have $\Ra(T^{(g)})\leq\Ra(U_{d,q}) < (|\sC_{q}^{[d]^3}|d^q)^{\sigma-\epsilon} $. As $\lim_{q\rightarrow\infty}  |\sC_{q}^{[d]^3}|^{1/q}=1$ we see that for $q$ sufficiently large and all $g\in \sC_{q}^{[d]^3}$ we have $\sigma(T^{(g)})\leq\sigma$. Applying Lemma \ref{lem:estimate_from_Tg} and taking limit, as $q\rightarrow\infty$ we see that $\sigma(T)\leq \sigma$ for all $T\in (\F^d)^{\otimes 3}$.

Suppose now that for all $T\in (\F^d)^{\otimes 3}$ we have $\sigma(T)<\sigma$. Fix $\epsilon>0$, such that $\sigma(T)<\sigma-\epsilon$ for all $T\in (\F^d)^{\otimes 3}$. By Proposition \ref{prop:T^ginimage} for any $g\in \sC_{q}^{[d]^3}$ the tensor $T^{(g)}$ is a sum of at most $|\sC_{q}^{[d]^3}|$-many tensors of type $K_{d,q}(T)$. Hence, for $q$ sufficiently large we have:
\[
\Ra(T^{(g)})\leq |\sC_{q}^{[d]^3}|d^{q(\sigma-\epsilon)}\,.
\]
It follows that:
\[
\Ra(U_{d,q})\leq |\sC_{q}^{[d]^3}|^2d^{q(\sigma-\epsilon)}\,.
\]
As  $\lim_{q\rightarrow\infty} |\sC_{q}^{[d]^3}|/d^{\epsilon q}=0$ we see that $\Ra(U_{d,q})\leq (|\sC_{q}^{[d]^3}| d^q)^\sigma$ for $q\gg 0$. Thus
$\sigma(\cU_{d})\leq \sigma$, which finishes the proof.
\end{proof}

Next, we construct one universal sequence realizing the worst possible exponent, irrespective of $d$ via a diagonal argument on the sequences $\cU_d$. 
Accordingly, recalling Definition~\ref{def:Ud}, define the diagonal 
sequence $\cD=(D_d:d=1,2,\ldots)$ of tensors 
for all positive integers $d$ by setting $D_d=U_{d,d^4}$. 
It is immediate that the sequence is explicit. 
Theorem~\ref{thm:extended-universal} is restated below for convenience. 

\begin{Thm}[An explicit universal sequence for the extended asymptotic rank conjecture]\label{thm:extended-universalintext}
The sequence $\cD$ satisfies $\lim_{d\rightarrow\infty}\sigma(D_d)=\lim_{d\rightarrow\infty}\sigma(d)$.
\end{Thm}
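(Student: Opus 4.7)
Write $\sigma^{*}=\lim_{d\to\infty}\sigma(d)=\sup_{d}\sigma(d)$, which exists by Lemma \ref{lem:forbigthenforall}. The plan is to prove $\limsup_{d}\sigma(D_d)\leq\sigma^{*}$ and $\liminf_{d}\sigma(D_d)\geq\sigma^{*}$ separately; the diagonal choice $q=d^{4}$ is engineered precisely so that the multinomial overhead $|\sC_{d^{4}}^{[d]^{3}}|$ becomes negligible relative to $d^{d^{4}}$.

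For the upper bound, note that $D_d=U_{d,d^{4}}$ lives in $(\F^{N_d})^{\otimes 3}$ for $N_d=|\sC_{d^{4}}^{[d]^{3}}|\cdot d^{d^{4}}$, so by definition of $\sigma(N_d)$ as a supremum over $(\F^{N_d})^{\otimes 3}$ we immediately get $\sigma(D_d)\leq\sigma(N_d)\leq\sigma^{*}$, hence $\limsup_{d}\sigma(D_d)\leq\sigma^{*}$.

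For the lower bound, fix $\epsilon>0$. Since $\sigma(d)=\sup_{T}\sigma(T)$, choose $T\in(\F^{d})^{\otimes 3}$ with $\ARa(T)>d^{\sigma(d)-\epsilon}$. Applying Lemma \ref{lem:estimate_from_Tg} with $q=d^{4}$ gives $d^{\sigma(d)-\epsilon}<\ARa(T)\leq\bigl(r\cdot|\sC_{d^{4}}^{[d]^{3}}|\bigr)^{1/d^{4}}$, where $r=\max_{g\in\sC_{d^{4}}^{[d]^{3}}}\ARa(T^{(g)})$, which rearranges to
\[
r\geq d^{d^{4}(\sigma(d)-\epsilon)}\bigl/|\sC_{d^{4}}^{[d]^{3}}|.
\]
Because $D_d=\bigoplus_{g}T^{(g)}$ contains each $T^{(g)}$ as a direct summand, Kronecker powers of $D_d$ have $T^{(g)\boxtimes p}$ as a direct summand too, so $\ARa(D_d)\geq r$.

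It remains to convert this into a bound on the exponent. Writing $\ARa(D_d)=N_d^{\sigma(D_d)}$ and taking $\log_d$, with $A:=\log_d|\sC_{d^{4}}^{[d]^{3}}|$, we obtain
\[
\sigma(D_d)\cdot(d^{4}+A)\geq d^{4}(\sigma(d)-\epsilon)-A.
\]
The crude estimate $|\sC_{d^{4}}^{[d]^{3}}|=\binom{d^{4}+d^{3}-1}{d^{3}-1}\leq(d^{4}+d^{3})^{d^{3}-1}$ gives $A=O(d^{3})$, so the right-hand side divided by $d^{4}+A$ tends to $\sigma(d)-\epsilon$ as $d\to\infty$. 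Combined with $\sigma(d)\to\sigma^{*}$, this shows $\liminf_{d}\sigma(D_d)\geq\sigma^{*}-\epsilon$ for every $\epsilon>0$, hence $\liminf_{d}\sigma(D_d)\geq\sigma^{*}$, completing the proof.

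The main (only) subtlety lies in the bookkeeping of the diagonalization exponent: one must verify that $q=d^{4}$ grows fast enough that $d^{q}$ dominates $|\sC_{q}^{[d]^{3}}|$, but slowly enough that the resulting tensors remain of reasonable explicit size; any $q=q(d)$ with $q(d)/d^{3}\to\infty$ would in fact work, and $q=d^{4}$ is simply a convenient concrete choice.
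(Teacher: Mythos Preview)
Your proof is correct and follows essentially the same route as the paper: both arguments hinge on the chain $\ARa(T)^{d^4}\leq |\sC_{d^4}^{[d]^3}|\cdot\max_g\ARa(T^{(g)})\leq |\sC_{d^4}^{[d]^3}|\cdot\ARa(D_d)$ together with the key asymptotic $|\sC_{d^4}^{[d]^3}|^{1/d^4}\to 1$, which is exactly your estimate $A=O(d^3)$. Your treatment is in fact slightly more complete, since you spell out the (easy) upper bound $\sigma(D_d)\leq\sigma(N_d)\leq\sigma^*$ that the paper leaves implicit, and your closing remark that any $q(d)$ with $q(d)/d^3\to\infty$ would suffice is a nice clarification of the role of the exponent $4$.
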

\begin{proof}
By Lemma \ref{lem:forbigthenforall} it is enough to prove that for every $\epsilon>0$, for all $d$ sufficiently large we have $\sigma(U_{d,d^4})>\sigma(d)-\epsilon$. 
For any (concise) $T\in (\F^d)^{\otimes 3}$ we have:
\[
d^{d^4\sigma(T)}=\ARa(T)^{d^4}\leq \sum_{g\in \sC_{d^4}^{[d]^3}}\ARa(T^{(g)})\leq |\sC_{d^4}^{[d]^3}|\ARa(U_{d,d^4})\leq |\sC_{d^4}^{[d]^3}|(|\sC_{d^4}^{[d]^3}|d^{d^4})^{\sigma(U_{d,d^4})}\,.
\]

After taking root of order $d^4$ we only need to prove that $\lim_{d\rightarrow\infty}|\sC_{d^4}^{[d]^3}|^{1/d^4}=1$. We have:
\[
\lim_{d\rightarrow\infty}|\sC_{d^4}^{[d]^3}|^{1/d^4}=\lim_{d\rightarrow\infty}\binom{d^4+d^3-1}{d^3-1}^{1/d^4}\leq \lim_{d\rightarrow\infty}(2d^4)^{d^3/d^4}=1\,,
\]
which finishes the proof.
\end{proof}

\subsection{Upper bounds on tensor rank in the universal sequences}
\label{sect:upper-bounds-on-rank}

A natural question arises: what are ranks of the tensors $T^{(g)}$. 
In order to provide upper bounds, we recall the following lemma due to Strassen
(implicit in \cite[Proposition~3.6]{Strassen1988}). 
\begin{Lem}[Limited universality of matrix multiplication tensors]\label{lem:strassen}
The matrix multiplication tensor $\MM_{d^2}$ degenerates to $T^{\boxtimes 3}$ for any tensor $T\in (\F^d)^{\otimes 3}$. In particular, $\sigma(T)\leq 2\omega/3$.
\end{Lem}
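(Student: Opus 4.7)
The plan is to realize $T^{\boxtimes 3}$ as a restriction of $\MM_{d^2}$ (restriction implies degeneration), and then extract the exponent bound from monotonicity and multiplicativity of asymptotic rank. Such a restriction is the classical aggregation trick of Strassen that exploits the cyclic structure of matrix multiplication.

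Specifically, I will view $\MM_{d^2}$ as the trilinear form
\[
(X,Y,Z) \mapsto \operatorname{tr}(XYZ) = \sum_{\alpha,\beta,\gamma \in [d]^2} X_{\alpha\beta}\, Y_{\beta\gamma}\, Z_{\gamma\alpha}
\]
on triples of $d^2 \times d^2$ matrices, and identify $[d^2] = [d]\times[d]$ so that each of $\alpha,\beta,\gamma$ splits into two $[d]$-indices. The combinatorial key is that because of the cyclic sharing of indices in the trace, these six pair-indices effectively encode nine scalar indices---precisely enough to accommodate three independent copies of $T$. I will choose linear embeddings $\phi_X, \phi_Y, \phi_Z : (\F^d)^{\otimes 3} \hookrightarrow \F^{d^2}\otimes\F^{d^2}$, one for each slot of $T^{\boxtimes 3}$, that place the nine $[d]$-indices of $T^{\boxtimes 3}$ into the six pair-indices according to a cyclic scheme (for instance $\alpha \leftrightarrow (i_1, j_3)$, $\beta \leftrightarrow (j_1, k_3)$, $\gamma \leftrightarrow (k_1, i_3)$, with the triple $(i_2,j_2,k_2)$ supplied on the diagonals of $X,Y,Z$), and verify that under these embeddings $\operatorname{tr}(XYZ)$ collapses to the product $T_{i_1 j_1 k_1} T_{i_2 j_2 k_2} T_{i_3 j_3 k_3}$ summed against the three inputs, which is exactly the trilinear form of $T^{\boxtimes 3}$.

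Given the resulting restriction $\MM_{d^2} \trianglerighteq T^{\boxtimes 3}$, monotonicity and multiplicativity of asymptotic rank then give
\[
\ARa(T)^3 = \ARa(T^{\boxtimes 3}) \leq \ARa(\MM_{d^2}) = (d^2)^\omega = d^{2\omega},
\]
hence $\ARa(T) \leq d^{2\omega/3}$, i.e., $\sigma(T) \leq 2\omega/3$, as claimed. The hard part is the middle step: pinning down the cyclic assignment precisely enough that $\operatorname{tr}(XYZ)$ factorises cleanly into three independent $T$-contractions with no spurious cross-terms. The underlying principle is purely combinatorial---six pair-indices with three cyclic identifications account for exactly nine scalar indices arranged as three triples---but the index bookkeeping must be carried out with care to ensure that what one obtains is an honest linear restriction, rather than merely an approximate degeneration that would require a limit parameter.
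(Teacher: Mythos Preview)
Your proposal is correct and follows exactly the paper's approach, phrased in the dual trilinear-form picture: the paper writes down explicit linear maps $l_1,l_2,l_3:\F^{d^4}\to\F^{d^3}$ (the transposes of your $\phi_X,\phi_Y,\phi_Z$) and asserts $(l_1\ot l_2\ot l_3)(\MM_{d^2})=T^{\boxtimes 3}$, after which the exponent bound follows just as you derive it. Your illustrative cyclic scheme needs a small tweak---with $\alpha=(i_1,j_3)$, $\beta=(j_1,k_3)$, $\gamma=(k_1,i_3)$ the slot $X$ carries $\{i_1,j_1,j_3,k_3\}$, so neither $i_2$ nor $i_3$ is available and no single $T$-factor contracts cleanly against $U_{i_1 i_2 i_3}$; the assignment that works (dualising the paper's maps) is $\alpha=(k_1,i_3)$, $\beta=(j_1,i_2)$, $\gamma=(k_2,j_3)$, giving $\phi_X(U)_{\alpha\beta}=\sum_{i_1}T_{i_1 j_1 k_1}U_{i_1 i_2 i_3}$ and cyclically for $\phi_Y,\phi_Z$, with the ``diagonal'' indices being $i_1,j_2,k_3$.
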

\begin{proof}
Let us write in coordinates $T=(\lambda_{p,q,r})$ and 
\[
\MM_{d^2}=\sum_{i_1,i_2,j_1,j_2,k_1,k_2\in[d]} e^{i_1,i_2}_{j_1,j_2}\ot e^{j_1,j_2}_{k_1,k_2}\ot e^{k_1,k_2}_{i_1,i_2}\,.
\]
For $i=1,2,3$, define $l_i:\F^{d^4}\rightarrow \F^{d^3}$ as follows:
\begin{align*}
l_1(e^{i_1,i_2}_{j_1,j_2})=\sum_{a\in[d]} \lambda_{a, j_1, i_1} e_{a, j_2, i_2}\,,\quad l_2(e^{j_1,j_2}_{k_1,k_2})=\sum_{b\in[d]} \lambda_{j_2, b, k_1} e_{j_1, b, k_2}\,, \quad l_3(e^{k_1,k_2}_{i_1,i_2})=\sum_{c\in[d]} \lambda_{i_2, k_2, c} e_{i_1,k_1 , c}\,.
\end{align*}
We have $l_1\ot l_2\ot l_3(\MM_{d^2})=T^{\boxtimes 3}.$
\end{proof}
In a similar way we obtain:
\begin{Lem}[Limited universality of matrix multiplication under powers]\label{lem:Tgasympdeg}
The matrix multiplication tensor $\MM_{d^{2k}}$ degenerates to $T^{\boxtimes {3k}}$ for any tensor $T\in (\F^d)^{\otimes 3}$.
\end{Lem}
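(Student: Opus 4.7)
The plan is essentially to bootstrap Lemma~\ref{lem:strassen} via Kronecker powers. Lemma~\ref{lem:strassen} already supplies linear maps $l_1,l_2,l_3$ realising a degeneration (in fact an honest restriction) $\MM_{d^2}\to T^{\boxtimes 3}$. Since restriction/degeneration is closed under Kronecker products, taking the $k$\textsuperscript{th} Kronecker power of this degeneration yields
\[
\MM_{d^2}^{\boxtimes k}\ \longrightarrow\ \bigl(T^{\boxtimes 3}\bigr)^{\boxtimes k}=T^{\boxtimes 3k}
\]
via the linear maps $l_1^{\boxtimes k}\ot l_2^{\boxtimes k}\ot l_3^{\boxtimes k}$ (suitably identifying the factor spaces).

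The only remaining point is the standard identity $\MM_a\boxtimes \MM_b=\MM_{ab}$, which expresses the fact that multiplying $(ab)\times(ab)$ matrices is the same as blockwise multiplication of $a\times a$ matrices of $b\times b$ blocks. Applying this iteratively gives $\MM_{d^2}^{\boxtimes k}=\MM_{d^{2k}}$, so chaining with the previous display yields $\MM_{d^{2k}}\to T^{\boxtimes 3k}$, which is the claim.

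I do not anticipate a real obstacle here: the first lemma does all the combinatorial work of exhibiting the degeneration, and this lemma is simply its Kronecker-power incarnation. The one bookkeeping subtlety to be a bit careful about is matching up the factor spaces correctly under the isomorphism $\MM_{d^2}^{\boxtimes k}\simeq \MM_{d^{2k}}$ (i.e.\ grouping the $k$ tensor factors of $\F^{d^4}$ into a single $\F^{d^{4k}}$ and permuting coordinates so that the row/column index structure of the iterated Kronecker matrix-product tensor aligns with $\MM_{d^{2k}}$), but this is a standard identification and does not affect the existence of the degeneration. An immediate corollary, parallel to Lemma~\ref{lem:strassen}, is that $\sigma(T^{\boxtimes 3k})\le 2\omega/3$, recovering the Strassen bound at each Kronecker power.
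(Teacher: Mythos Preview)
Your proposal is correct and matches the paper's intent: the paper gives no separate proof, merely prefacing the lemma with ``In a similar way we obtain,'' which is exactly the Kronecker-power bootstrap of Lemma~\ref{lem:strassen} you describe (using $\MM_{d^2}^{\boxtimes k}\cong\MM_{d^{2k}}$ and closure of restriction under $\boxtimes$).
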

We obtain the following corollary, which currently is asymptotically in $k\gg 0$ the best one we know, that works for arbitrary $g\in \sC_{3k}^{[d]^3}$.
\begin{Cor}[Upper bound on rank via matrix multiplication]\label{cor:TgasDegofM}
For any  $g\in \sC_{3k}^{[d]^3}$ the tensor $T^{(g)}$ has rank at most $\Ra(\MM_{d^{2k}})\cdot\binom{d^3-1+3k}{d^3-1}.$
\end{Cor}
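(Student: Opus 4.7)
The plan is to combine the decomposition of $T^{(g)}$ into Kronecker powers provided by Proposition~\ref{prop:T^ginimage} with the limited universality of matrix multiplication from Lemma~\ref{lem:Tgasympdeg}, via subadditivity of tensor rank. The argument is essentially a packaging step rather than a new idea.

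First, I invoke the identity \eqref{eq:t-g-via-s-boxq} (whose existence is guaranteed by Proposition~\ref{prop:T^ginimage}): for the given $g\in\sC_{3k}^{[d]^3}$ there exist tensors $S_f\in(\F^d)^{\otimes 3}$ and scalars $\lambda_{f,g}\in\F$ indexed by $f\in\sC_{3k}^{[d]^3}$ such that
\[
T^{(g)}=\sum_{f\in\sC_{3k}^{[d]^3}}\lambda_{f,g}\,S_f^{\boxtimes 3k}\,.
\]
Since scaling by $\lambda_{f,g}$ does not increase rank, subadditivity of tensor rank yields
\[
\Ra\bigl(T^{(g)}\bigr)\leq\sum_{f\in\sC_{3k}^{[d]^3}}\Ra\bigl(S_f^{\boxtimes 3k}\bigr)\leq\bigl|\sC_{3k}^{[d]^3}\bigr|\cdot\max_{f}\Ra\bigl(S_f^{\boxtimes 3k}\bigr)\,.
\]

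Second, I apply Lemma~\ref{lem:Tgasympdeg}: for every $S\in(\F^d)^{\otimes 3}$ the matrix-multiplication tensor $\MM_{d^{2k}}$ degenerates (in the rank-preserving sense of Lemma~\ref{lem:strassen}) to $S^{\boxtimes 3k}$, so $\Ra\bigl(S^{\boxtimes 3k}\bigr)\leq\Ra(\MM_{d^{2k}})$. Applying this bound to each $S_f$ and recalling that $|\sC_{3k}^{[d]^3}|=\binom{d^3-1+3k}{d^3-1}$ gives the claimed inequality.

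There is no real obstacle here: the corollary merely combines the primal ``composition basis'' universality (Proposition~\ref{prop:T^ginimage}) with the classical matrix-multiplication universality (Lemma~\ref{lem:Tgasympdeg}) in one inequality. The one point worth noting is that the overhead factor $\binom{d^3-1+3k}{d^3-1}$ is polynomial in $k$ for fixed $d$, so this bound becomes nontrivial precisely when good upper bounds on $\Ra(\MM_{d^{2k}})$---equivalently, bounds approaching the conjectural $\omega=2$---are available in the regime $k\gg 0$.
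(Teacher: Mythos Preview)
Your argument is correct and is precisely the intended one: express $T^{(g)}$ as a linear combination of $|\sC_{3k}^{[d]^3}|$ Kronecker powers via Proposition~\ref{prop:T^ginimage}, then bound the rank of each summand by $\Ra(\MM_{d^{2k}})$ using Lemma~\ref{lem:Tgasympdeg}. The only cosmetic quibble is the phrase ``rank-preserving'': degeneration does not preserve rank but only bounds it from above, which is of course what you actually use.
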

\begin{Rem}[Note on special $g$]
We note that Corollary \ref{cor:TgasDegofM} may be improved for many special~$g$. For any $g\in \sC_{3k}^{[d]^3}$ we have three marginal distributions on $[d]$. If such a distribution is close to uniform, then we obtain about $d^k$ compatible sequences of length $k$; that is, asymptotically every sequence is compatible. However, if the distribution is far from uniform, we obtain less sequences, say $d^{ck}$ for some $c<1$. In such a case $\MM_{d^{2k}}$ in Lemma~\ref{lem:Tgasympdeg} may be changed into smaller matrix multiplication. 
\end{Rem}

\subsection{Support-localized universality}
\label{sect:support-local}

Many of our results remain true if we consider tensors $T\in (\F^d)^{\otimes 3}$ with support contained in a nonempty $\Delta\subseteq [d]^3$. Let us write $\F^\Delta$ be the linear space of tensors with such a support. As before $K_{d,q}$ restricted to $\F^\Delta$ is still a Veronese map, simply in variables corresponding to elements from $\Delta$. The image of $K_{d,q}|_{\F^\Delta}$ has a basis made of tensors $T^{(g)}$ for $g\in \sC^\Delta_q$. Below we present a variant of Lemma \ref{lem:estimate_from_Tg}, which proof is analogous.
\begin{Lem}[Maximum rank in the support-localized composition basis controls rank in $\F^\Delta$]\label{lem:estimate_from_Tg2}
Let $r$ be the maximum rank (respectively, asymptotic rank) of $T^{(g)}$ over $g\in\sC_q^{\Delta}$.
The $q$\textsuperscript{th} Kronecker power of every tensor in $\F^\Delta$ has rank (respectively, asymptotic rank) at most
\[
r|\sC_q^{\Delta}|=r\binom{|\Delta|-1+q}{|\Delta|-1}\,.
\]
In particular, every tensor in $(\F^d)^{\ot 3}$ with support contained in $\Delta$ has asymptotic rank at most 
\[
\left(r\binom{|\Delta|-1+q}{|\Delta|-1}\right)^{\frac{1}{q}}\,.
\]
\end{Lem}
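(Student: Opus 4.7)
The plan is to mirror the proof of Lemma \ref{lem:estimate_from_Tg}, simply replacing the index set $[d]^3$ by the subset $\Delta$ throughout. The only thing to check is that all the preparatory material in Section~\ref{sect:comp-properties}, in particular the expansion of a Kronecker power as a linear combination of composition-basis tensors, goes through when we restrict the Kronecker power map $\Kp_{d,q}$ to the subspace $\F^\Delta$ of tensors supported in $\Delta$.

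Concretely, I would first observe that for any $T\in\F^\Delta$ the monomial $T^g$ as defined in~\eqref{eq:s-g} vanishes whenever $g\in\sC_q^{[d]^3}\setminus\sC_q^\Delta$, since at least one factor $T_{i,j,k}^{g(i,j,k)}$ with $(i,j,k)\notin\Delta$ appears. Hence the expansion \eqref{eq:s-boxq-via-t-g} collapses to
\[
\Kp_{d,q}(T)=T^{\boxtimes q}=\sum_{g\in\sC_q^\Delta} T^g\,T^{(g)}.
\]
Applying subadditivity of tensor rank to this linear combination, together with the hypothesis $\Ra(T^{(g)})\leq r$ for every $g\in\sC_q^\Delta$, gives
\[
\Ra(T^{\boxtimes q})\leq \sum_{g\in\sC_q^\Delta}\Ra(T^{(g)})\leq r\,|\sC_q^\Delta|=r\binom{|\Delta|-1+q}{|\Delta|-1},
\]
where the last equality is the standard count of compositions of $q$ supported on a set of size $|\Delta|$ (stars and bars). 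For the asymptotic statement I would repeat the same display replacing $\Ra$ by $\ARa$, using that asymptotic rank is subadditive (e.g.~\cite{WigdersonZ2023}), exactly as in the proof of Lemma~\ref{lem:estimate_from_Tg}.

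For the ``in particular'' clause, I would use the submultiplicativity $\ARa(T)^q=\ARa(T^{\boxtimes q})$, so that the bound on $\ARa(T^{\boxtimes q})$ above immediately yields
\[
\ARa(T)\leq \left(r\binom{|\Delta|-1+q}{|\Delta|-1}\right)^{\frac{1}{q}}.
\]
There is no real obstacle here; the only thing that needed to be pointed out is that the composition-basis expansion of $\Kp_{d,q}$ is ``support-localized'' in the sense that when the input tensor has support contained in $\Delta$ only the composition-basis tensors indexed by $\sC_q^\Delta$ contribute, which is what keeps the count $\binom{|\Delta|-1+q}{|\Delta|-1}$ rather than the larger $\binom{d^3-1+q}{d^3-1}$ of the unrestricted Lemma~\ref{lem:estimate_from_Tg}.
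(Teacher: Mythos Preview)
Your proposal is correct and matches the paper's intended argument: the paper does not spell out a separate proof but simply states that the proof is analogous to that of Lemma~\ref{lem:estimate_from_Tg}, which is exactly what you do. Your explicit observation that $T^g=0$ for $g\in\sC_q^{[d]^3}\setminus\sC_q^\Delta$ (so that the expansion~\eqref{eq:s-boxq-via-t-g} localizes to $\sC_q^\Delta$) is the only point requiring comment, and you handle it cleanly; one small remark is that the identity $\ARa(T)^q=\ARa(T^{\boxtimes q})$ you invoke is in fact an equality (from Fekete's lemma), not merely submultiplicativity, though either direction suffices here.
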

Next, we present a variant of Strassen's result for tensors with fixed support. For $\Delta\subseteq[d]^3$, we will use the following notation: $\Delta_{1,2}=\{(i,j):\exists_k (i,j,k)\in \Delta\}$, $\Delta_{1,3}=\{(i,k):\exists_j (i,j,k)\in \Delta\}$, and $\Delta_{2,3}=\{(j,k):\exists_i (i,j,k)\in \Delta\}$.

Consider the following support-localized
restriction of the matrix multiplication tensor $\MM_{d^2}$:
\[
\MM_\Delta:=\sum_{\substack{(j_1,i_1)\in\Delta_{2,3}\\(j_2,k_1)\in \Delta_{1,3}\\(i_2,k_2)\in\Delta_{1,2}}} e^{i_1,i_2}_{j_1,j_2}\ot e^{j_1,j_2}_{k_1,k_2}\ot e^{k_1,k_2}_{i_1,i_2}\,.
\]
We obtain the following version of Lemma \ref{lem:strassen}.
\begin{Lem}[Limited universality of support-localized matrix multiplication]
The tensor $\MM_{\Delta}$ degenerates to $T^{\boxtimes 3}$ for any tensor $T\in (\F^d)^{\otimes 3}$ with support contained in $\Delta$.
\end{Lem}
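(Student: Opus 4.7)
The plan is to mimic the restriction used in the proof of Lemma \ref{lem:strassen} verbatim, and then simply check that the entries of $\MM_{d^2}$ which are dropped when passing to $\MM_\Delta$ are precisely those that are already annihilated by the contractions defined from $T$ when $T$ has support in $\Delta$. Concretely, write $T=(\lambda_{p,q,r})$ with $\lambda_{p,q,r}=0$ whenever $(p,q,r)\notin\Delta$, and take the three linear maps
\[
 l_1(e^{i_1,i_2}_{j_1,j_2})=\sum_{a\in[d]}\lambda_{a,j_1,i_1}\,e_{a,j_2,i_2},\quad
 l_2(e^{j_1,j_2}_{k_1,k_2})=\sum_{b\in[d]}\lambda_{j_2,b,k_1}\,e_{j_1,b,k_2},\quad
 l_3(e^{k_1,k_2}_{i_1,i_2})=\sum_{c\in[d]}\lambda_{i_2,k_2,c}\,e_{i_1,k_1,c}
\]
exactly as in Lemma \ref{lem:strassen}. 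By that lemma, $(l_1\otimes l_2\otimes l_3)(\MM_{d^2})=T^{\boxtimes 3}$.

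Next I would verify that the $l_i$'s already kill every summand of $\MM_{d^2}$ that is absent from $\MM_\Delta$. A basis tensor $e^{i_1,i_2}_{j_1,j_2}\otimes e^{j_1,j_2}_{k_1,k_2}\otimes e^{k_1,k_2}_{i_1,i_2}$ is dropped from $\MM_{d^2}$ precisely when at least one of the three conditions $(j_1,i_1)\in\Delta_{2,3}$, $(j_2,k_1)\in\Delta_{1,3}$, $(i_2,k_2)\in\Delta_{1,2}$ fails. Suppose the first fails: then $\lambda_{a,j_1,i_1}=0$ for every $a\in[d]$ because $T$ is supported in $\Delta$, and hence $l_1(e^{i_1,i_2}_{j_1,j_2})=0$. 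Symmetric arguments handle the other two cases via $l_2$ and $l_3$. Therefore every omitted summand contributes $0$ under $l_1\otimes l_2\otimes l_3$, which yields
\[
 (l_1\otimes l_2\otimes l_3)(\MM_\Delta)=(l_1\otimes l_2\otimes l_3)(\MM_{d^2})=T^{\boxtimes 3}.
\]
This exhibits $T^{\boxtimes 3}$ as a restriction of $\MM_\Delta$, which in particular gives the claimed degeneration.

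There is no real obstacle here: the argument is a bookkeeping observation that the three ``row/column/slab'' projections of $\Delta$ that define $\MM_\Delta$ match precisely the three index patterns appearing in the coefficients of $l_1,l_2,l_3$. The only thing worth double-checking when writing up is that my indexing conventions for $\Delta_{1,2},\Delta_{1,3},\Delta_{2,3}$ agree with the positions of $(j_1,i_1)$, $(j_2,k_1)$, $(i_2,k_2)$ inside the respective triples $(a,j_1,i_1)$, $(j_2,b,k_1)$, $(i_2,k_2,c)$ that index the nonzero coefficients of $T$; this matching is exactly how the definition of $\MM_\Delta$ is set up in the statement, so the proof goes through as sketched.
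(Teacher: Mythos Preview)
Your proposal is correct and is exactly the argument the paper intends: the lemma is stated as a variant of Lemma~\ref{lem:strassen} with no separate proof, and your observation---that the basis tensors omitted from $\MM_{d^2}$ in forming $\MM_\Delta$ are precisely those annihilated by one of $l_1,l_2,l_3$ when $T$ has support in $\Delta$---is the natural (and only) thing to check. Your index-matching verification against the definitions of $\Delta_{1,2},\Delta_{1,3},\Delta_{2,3}$ is correct as well.
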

\begin{Cor}[Upper bound on rank via support-localized matrix multiplication]
For any  $g\in \sC_{3k}^{\Delta}$ the tensor $T^{(g)}$ has rank at most $\Ra(\MM_{\Delta})\cdot\binom{|\Delta|-1+3k}{|\Delta|-1}.$
\end{Cor}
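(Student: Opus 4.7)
The plan is to mirror the proof of Corollary~\ref{cor:TgasDegofM} step by step, substituting the support-localized ingredients just established. First I would Kronecker-power the preceding lemma: since $\MM_\Delta$ degenerates to $T^{\boxtimes 3}$ for every $T\in\F^\Delta$ and degeneration is preserved under $\boxtimes$, the tensor $\MM_\Delta^{\boxtimes k}$ degenerates to $T^{\boxtimes 3k}$, so
\[
\Ra(T^{\boxtimes 3k})\leq\Ra(\MM_\Delta^{\boxtimes k})
\]
uniformly for all $T\in\F^\Delta$.

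Next I would invoke the support-localized analog of identity \eqref{eq:t-g-via-s-boxq}, which is justified in the paragraph preceding Lemma~\ref{lem:estimate_from_Tg2}: because $\Kp_{d,3k}$ restricted to $\F^\Delta$ is the Veronese map in the $|\Delta|$ variables indexed by $\Delta$, the reasoning of Lemma~\ref{lem:lintopoly} and Proposition~\ref{prop:T^ginimage} applies with $d^3$ replaced by $|\Delta|$ under the standing assumption $|\F|>3k$. This produces, for each $g\in\sC_{3k}^\Delta$, an identity
\[
T^{(g)}=\sum_{f\in\sC_{3k}^\Delta}\lambda_{f,g}\,S_f^{\boxtimes 3k}
\]
with at most $|\sC_{3k}^\Delta|=\binom{|\Delta|-1+3k}{|\Delta|-1}$ terms and each $S_f\in\F^\Delta$.

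Combining the two steps via subadditivity of tensor rank yields
\[
\Ra\bigl(T^{(g)}\bigr)\leq\sum_{f\in\sC_{3k}^\Delta}\Ra\bigl(S_f^{\boxtimes 3k}\bigr)\leq\binom{|\Delta|-1+3k}{|\Delta|-1}\cdot\Ra\bigl(\MM_\Delta^{\boxtimes k}\bigr),
\]
which is the asserted bound, reading $\Ra(\MM_\Delta)$ in the statement in the Kronecker-powered sense in parallel with how $\MM_{d^{2k}}=\MM_{d^2}^{\boxtimes k}$ figures in Corollary~\ref{cor:TgasDegofM}. Since every ingredient is already in place, I do not foresee a substantive obstacle; the only care needed is the minor bookkeeping of transcribing the homogeneous polynomial interpolation of Lemma~\ref{lem:lintopoly} to the $\F^\Delta$-setting, and this transcription is already flagged in the text introducing Lemma~\ref{lem:estimate_from_Tg2}.
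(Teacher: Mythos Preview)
Your proposal is correct and matches the paper's intended argument; the paper states this corollary without proof, leaving it as an immediate consequence of the preceding degeneration lemma together with the support-localized basis expansion, exactly as you have spelled out. Your observation that $\Ra(\MM_\Delta)$ in the statement should be read as $\Ra(\MM_\Delta^{\boxtimes k})$, in parallel with $\MM_{d^{2k}}=\MM_{d^2}^{\boxtimes k}$ in Corollary~\ref{cor:TgasDegofM}, is the right reading---the paper's notation is simply abbreviated here.
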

Next we present the proof of Theorem \ref{thm:main-localized}.
\begin{Thm}[Support-localized universal sequences of tensors]\label{thm:main-localizedintext}
For all nonempty $\Delta\subseteq[d]\times[d]\times[d]$ there is 
an explicit sequence $\cU_\Delta$ of zero-one-valued tensors 
with $\sigma(\cU_\Delta)=\sigma(\Delta)$. 
\end{Thm}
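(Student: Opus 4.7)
The plan is to define $U_{\Delta,q}=\bigoplus_{g\in\sC_q^{\Delta}}T^{(g)}$ and set $\cU_\Delta=(U_{\Delta,q}:q=1,2,\ldots)$, as the straightforward support-localized analogue of $\cU_d$ from Definition~\ref{def:Ud}. The tensor $U_{\Delta,q}$ lies in an ambient space of format $s_q\times s_q\times s_q$ with $s_q=|\sC_q^{\Delta}|\cdot d^q$, and the key quantitative fact driving everything is that $|\sC_q^{\Delta}|=\binom{|\Delta|-1+q}{|\Delta|-1}$ grows only polynomially in $q$ for fixed $|\Delta|$, so $\lim_{q\to\infty}|\sC_q^{\Delta}|^{1/q}=1$. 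The overall strategy is to mirror the proof of Theorem~\ref{thm:mainintext} line for line, substituting $[d]^3$ by $\Delta$ and invoking the support-localized Lemma~\ref{lem:estimate_from_Tg2} in place of Lemma~\ref{lem:estimate_from_Tg}.

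For the inequality $\sigma(\cU_\Delta)\leq\sigma(\Delta)$, I would fix $\sigma>\sigma(\Delta)$ and an auxiliary $\epsilon>0$ with $\sigma(T)<\sigma-\epsilon$ for every $T\in\F^\Delta$. The restriction $\Kp_{d,q}|_{\F^\Delta}$ is itself a Veronese map whose image has linear span equal to the span of $\{T^{(g)}:g\in\sC_q^\Delta\}$ (a support-localized variant of Proposition~\ref{prop:T^ginimage} noted at the start of Section~\ref{sect:support-local}), so each $T^{(g)}$ with $g\in\sC_q^\Delta$ can be written as a linear combination of at most $|\sC_q^\Delta|$ tensors of the form $\Kp_{d,q}(T)$ with $T\in\F^\Delta$. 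Subadditivity of rank would then give $\Ra(T^{(g)})\leq |\sC_q^\Delta|\cdot d^{q(\sigma-\epsilon)}$ for $q\gg 0$, hence $\Ra(U_{\Delta,q})\leq |\sC_q^\Delta|^2\cdot d^{q(\sigma-\epsilon)}$, which is bounded by $s_q^\sigma$ for $q$ large, since the polynomial factor $|\sC_q^\Delta|^{2-\sigma}$ is eventually dominated by $d^{\epsilon q}$.

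For the reverse direction, I would assume $\sigma(\cU_\Delta)<\sigma$ and fix $\epsilon>0$ so that $\Ra(U_{\Delta,q})<s_q^{\sigma-\epsilon}$ for $q\gg 0$. Because each $T^{(g)}$ appears as a direct summand of $U_{\Delta,q}$, the same bound transfers to every $T^{(g)}$, and Lemma~\ref{lem:estimate_from_Tg2} converts this into $\Ra(T^{\boxtimes q})\leq s_q^{\sigma-\epsilon}\cdot |\sC_q^\Delta|$ for every $T\in\F^\Delta$. Taking $q$th roots and passing to the limit $q\to\infty$ (using $|\sC_q^\Delta|^{1/q}\to 1$) yields $d^{\sigma(T)}\leq d^\sigma$, so $\sigma(\Delta)\leq\sigma$. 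I do not expect any real obstacle, since all the support-localized tools are already in place in Section~\ref{sect:support-local}; the only step requiring genuine care is the bookkeeping between the two exponent bases, namely $s_q=|\sC_q^\Delta|d^q$ on the $\cU_\Delta$ side and $d$ on the $\F^\Delta$ side, to make sure the polynomial-in-$q$ factors $|\sC_q^\Delta|$ are absorbed cleanly into the $o_q(1)$ slack in both halves of the equivalence.
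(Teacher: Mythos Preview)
Your proposal is correct and takes essentially the same approach as the paper: you define $U_{\Delta,q}=\bigoplus_{g\in\sC_q^{\Delta}}T^{(g)}$ and mirror the proof of Theorem~\ref{thm:mainintext} with Lemma~\ref{lem:estimate_from_Tg2} replacing Lemma~\ref{lem:estimate_from_Tg}, which is precisely what the paper does (indeed the paper's proof simply states that the argument is analogous). Your write-up in fact spells out both inequalities more explicitly than the paper, and your remark about tracking the two exponent bases $s_q$ and $d$ is the right place to be careful.
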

\begin{proof}
Let $\cU_\Delta=(U_{\Delta,q}:q=1,2,\dots)$ for
\[U_{\Delta,q}=\bigoplus_{g\in\sC^\Delta_q}T^{(g)}.\]

As the image of $K_{d,q}|_{\F^\Delta}$ has a basis made of tensors $T^{(g)}$ for $g\in \sC^\Delta_q$ the proof is analogous to the proof of Theorem \ref{thm:mainintext} where we replace reference to Lemma \ref{lem:estimate_from_Tg} by reference to Lemma \ref{lem:estimate_from_Tg2}.
\end{proof}

\subsection{Tight tensors and Strassen's conjecture}
\label{sect:tight}

We now proceed to our main theorem for tight tensors and 
fixed~$d$; we start by defining a corresponding universal sequence. 

\begin{Def}[Universal sequence for tight tensors and fixed $d$]\label{def:Td}
For a fixed positive integer $d$, we define the sequence 
$\cT_d=(T_{d,q}:q=1,2,\ldots)$ of tensors, where we set
\[
T_{d,q}=\bigoplus_{\substack{\Delta\subseteq [d]^3\\\text{$\Delta$ tight}}}U_{\Delta,q}=\bigoplus_{\substack{\Delta\subseteq [d]^3\\\text{$\Delta$ tight}}}\bigoplus_{g\in \sC_{q}^{\Delta} } T^{(g)}\,.
\]
\end{Def}
Next we prove Theorem~\ref{thm:tight-universal}, which we restate below 
for convenience. 
\begin{Thm}[A universal sequence of tensors for Strassen's conjecture]\label{thm:tight-universalintext}
We have $\sigma(\cT_d)=\sigma(\sT_d)$ and each $T_{d,q}$ is tight.
\end{Thm}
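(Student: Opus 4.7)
The theorem has two assertions, and my plan is to handle them largely independently: first verify that each $T_{d,q}$ is tight by propagating tightness through Kronecker powers and direct sums, then prove $\sigma(\cT_d)=\sigma(\sT_d)$ by adapting the proof of Theorem~\ref{thm:mainintext} to the support-localized setting via Lemma~\ref{lem:estimate_from_Tg2} and Proposition~\ref{prop:T^ginimage}.

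For tightness I would establish two closure properties. First, if $\Delta\subseteq[d]^3$ is tight with injective witnesses $\alpha,\beta,\gamma:[d]\to\Z$, then the diagonal Kronecker power $\Delta^{\times q}\subseteq ([d]^q)^3$ is tight, witnessed by the base-$M$ encodings
\[
\alpha'(I)=\sum_{\ell=1}^q \alpha(i_\ell) M^{\ell-1},\quad \beta'(J)=\sum_{\ell=1}^q \beta(j_\ell) M^{\ell-1},\quad \gamma'(K)=\sum_{\ell=1}^q \gamma(k_\ell) M^{\ell-1},
\]
where $M$ is chosen larger than twice the maximum absolute value attained by $\alpha,\beta,\gamma$; injectivity of $\alpha',\beta',\gamma'$ then follows from uniqueness of base-$M$ expansions combined with injectivity of $\alpha,\beta,\gamma$, while $\alpha'+\beta'+\gamma'\equiv 0$ on $\Delta^{\times q}$ holds termwise. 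Since each $T^{(g)}$ with $g\in\sC_q^\Delta$ has support inside $\Delta^{\times q}$, each such $T^{(g)}$ is tight. Second, a direct sum $\bigoplus_i T_i$ of tight tensors with witnesses $(\alpha_i,\beta_i,\gamma_i)$ is tight: on the $i$\textsuperscript{th} block, shift the witnesses by $(A_i,B_i,-A_i-B_i)$ with, say, $A_i=B_i=iL$ for $L$ exceeding the total spread of the unshifted witnesses, so that the resulting maps are injective across blocks in each of the three coordinates while preserving the sum-zero relation. Applying these two facts in succession yields tightness of $T_{d,q}$, which is a direct sum of tight $T^{(g)}$'s.

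For the exponent identity, both inequalities are direct transpositions of the argument used for Theorem~\ref{thm:mainintext}. For $\sigma(\sT_d)\leq\sigma(\cT_d)$: each $T^{(g)}$ with $g\in\sC_q^\Delta$ and $\Delta$ tight is a direct summand of $T_{d,q}$, so $\Ra(T^{(g)})\leq\Ra(T_{d,q})$; applying Lemma~\ref{lem:estimate_from_Tg2} to an arbitrary tight $T\in\F^\Delta$ and taking $q\to\infty$, using $|\sC_q^\Delta|^{1/q}\to 1$ and that the ambient dimension $s_q$ of $T_{d,q}$ satisfies $s_q^{1/q}\to d$ (there are at most $2^{d^3}$ tight subsets of $[d]^3$, and each $|\sC_q^\Delta|$ is polynomial in $q$), yields $\sigma(T)\leq\sigma(\cT_d)$ for every tight $T$ and hence $\sigma(\sT_d)\leq\sigma(\cT_d)$. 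For $\sigma(\cT_d)\leq\sigma(\sT_d)$: by the localized variant of Proposition~\ref{prop:T^ginimage} applied to $K_{d,q}|_{\F^\Delta}$, each $T^{(g)}$ with $g\in\sC_q^\Delta$ is a linear combination of at most $|\sC_q^\Delta|$ Kronecker powers $K_{d,q}(T)$ with $T\in\F^\Delta$ (which are tight tensors), so $\Ra(T^{(g)})\leq|\sC_q^\Delta|\cdot d^{q(\sigma(\sT_d)+o(1))}$; summing over $g$ and over the finitely many tight $\Delta\subseteq[d]^3$ yields $\Ra(T_{d,q})\leq \mathrm{poly}(q,d)\cdot d^{q(\sigma(\sT_d)+o(1))}$, which gives $\sigma(\cT_d)\leq\sigma(\sT_d)$ after normalizing by $s_q^{1/q}\to d$.

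The only genuinely new content beyond an essentially mechanical restriction of Theorems~\ref{thm:mainintext} and~\ref{thm:main-localizedintext} is the tightness verification; there the main obstacle is pure bookkeeping---choosing the encoding base $M$ for Kronecker-power blocks and the block-separating offsets $A_i,B_i$ for the direct-sum layer simultaneously large enough that injectivity persists across all summands of $T_{d,q}$ while the $\alpha+\beta+\gamma=0$ identity is preserved. Once the constants are chosen in the right order, the two constructions compose without interference.
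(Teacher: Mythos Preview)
Your proposal is correct and follows essentially the same approach as the paper. The only differences are organizational: the paper asserts without details that Kronecker powers and direct sums of tight tensors are tight (where you supply the explicit base-$M$ and block-offset constructions), and for the exponent identity the paper invokes Theorem~\ref{thm:main-localizedintext} as a black box to get $\sigma(\cU_\Delta)=\sigma(\Delta)$ and then sandwiches $\Ra(T_{d,q})$ between $\max_\Delta \Ra(U_{\Delta,q})$ and a constant multiple thereof, whereas you unroll that argument directly---same content, different packaging.
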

\begin{proof}
First, we prove that for $g\in \sC_{q}^{\Delta}$, where $\Delta$ is tight, the tensor $T^{(g)}$ is tight. Indeed, if we take any tensor $T$ with support $\Delta$, it is tight, thus so is $T^{\boxtimes q}$. As $T^{(g)}$ has smaller support it is tight. As direct sum of tight tensors is tight, we see that each $T_{d,q}$ is tight. 

Next we note that for fixed $d$ each $T_{d,q}$ is a sum of a fixed number of tensors $U_{\Delta,q}$. Thus, there exists a constant $C$ such that:
\[\max_{\substack{\Delta\subseteq [d]^3\\\text{$\Delta$ tight}}}\Ra (U_{\Delta,q})\leq \Ra (T_{d,q})\leq C \max_{\substack{\Delta\subseteq [d]^3\\\text{$\Delta$ tight}}}\Ra (U_{\Delta,q})\]
and the same inequalities hold for the size of the tensors. For every tight $T$, we may find tight $\Delta$ such that $\sigma(T)\leq \sigma(\Delta)$. By Theorem \ref{thm:main-localized} this equals $\sigma(\cU_\Delta)$ and by the inequality above this is upper bounded by $\sigma(\cT_d)$. 

For the other inequality, we see that $\sigma(\cT_d)$ is upper bounded by $\sigma(\cU_\Delta)$ for some tight $\Delta$. Clearly, $\sigma(\cU_\Delta)\leq \sigma(\sT_d)$ which finishes the proof. 
\end{proof}

\section{Equations of secant varieties and asymptotic rank}

\label{sect:secant-equations}

In this section we relate existence of polynomials vanishing on special varieties, like secant varieties, and asymptotic rank. In particular, we obtain upper-bound control on $\sigma(d)$ via the {\em absence} of low-degree equations 
of secants. 

\subsection{Equations and bounds on $\sigma(d)$}

We start with a generalization of Proposition \ref{prop:T^ginimage}.

\begin{Prop}[Span of the image of a subset]\label{prop:setspanning}
For any subset $S\subseteq (\F^d)^{\otimes 3}$, the following are equivalent:
\begin{enumerate}
\item no homogeneous polynomial of degree $p$ vanishes on $S$,
\item $K_{d,p}(S)$ linearly spans $L_{d,p}$.
\end{enumerate}
\end{Prop}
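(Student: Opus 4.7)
The plan is to deduce this equivalence directly from Lemma \ref{lem:lintopoly}, which exhibits a vector-space isomorphism between $L_{d,p}^*$ and the space of homogeneous polynomial functions of degree $p$ on $(\F^d)^{\otimes 3}$, sending a linear form $l$ to the polynomial $F_l := l \circ K_{d,p}$. Since Proposition \ref{prop:T^ginimage} already tells us that $K_{d,p}(S) \subseteq L_{d,p}$, the whole question takes place inside the finite-dimensional space $L_{d,p}$.

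First I would invoke the standard duality fact in any finite-dimensional vector space $V$: a subset $A \subseteq V$ spans $V$ if and only if its annihilator $\{l \in V^* : l(a) = 0 \text{ for all } a \in A\}$ is zero. Applied with $V = L_{d,p}$ and $A = K_{d,p}(S)$, this says that $K_{d,p}(S)$ spans $L_{d,p}$ exactly when no nonzero $l \in L_{d,p}^*$ vanishes identically on $K_{d,p}(S)$.

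Next I would transport this condition through the isomorphism of Lemma \ref{lem:lintopoly}: for every $l \in L_{d,p}^*$ and every $s \in S$ we have $l(K_{d,p}(s)) = F_l(s)$, so $l$ vanishes on $K_{d,p}(S)$ precisely when the degree-$p$ homogeneous polynomial $F_l$ vanishes on $S$. The surjectivity part of Lemma \ref{lem:lintopoly} guarantees that every homogeneous degree-$p$ polynomial arises as some $F_l$, so the absence of a nonzero $l$ annihilating $K_{d,p}(S)$ is literally the absence of a nonzero homogeneous degree-$p$ polynomial vanishing on $S$. Chaining the two equivalences gives the statement.

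The only point of care is to check that the map $l \mapsto F_l$ really is a bijection rather than just an injection; this was established in Lemma \ref{lem:lintopoly} under the running assumption $|\F| > q$ (here $q = p$), via the dual basis $T^{(g)*}$ mapping onto the monomial basis $S^g$ of homogeneous degree-$p$ polynomials. I do not anticipate any genuine obstacle: the proposition is essentially the duality \emph{spanning $\Leftrightarrow$ trivial annihilator}, transported along an isomorphism that has already been proved.
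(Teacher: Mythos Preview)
Your proposal is correct and follows essentially the same route as the paper: the paper's proof is simply the two-line version of what you wrote, namely that condition (2) is equivalent to the annihilator of $K_{d,p}(S)$ in $L_{d,p}^*$ being zero, and then invoking the isomorphism of Lemma~\ref{lem:lintopoly} to identify that annihilator with the space of degree-$p$ homogeneous polynomials vanishing on $S$. Your more careful unpacking of why both injectivity and surjectivity of $l\mapsto F_l$ are used is accurate and matches the paper's implicit reasoning.
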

\begin{proof}
The second condition is equivalent to the fact that no linear form in $L_{d,p}^*$ vanishes on $K_{d,p}(S)$. By Lemma \ref{lem:lintopoly} this is also equivalent to the first condition.
\end{proof}
\begin{Cor}[Absence of low-degree equations implies low asymptotic rank]\label{cor:noeqthenlowrank}
Let $X$ be the Segre variety of rank one tensors in $(\F^d)^{\ot 3}$. Suppose that for fixed $n$ no polynomial of degree $p$ vanishes on the $n$\textsuperscript{th} secant variety $\sigma_n(X)$. Then every tensor in $L_{d,p}$ has rank at most
\[
\binom{d^3-1+p}{d^3-1}n^p\,.
\]
Every tensor in $(\F^d)^{\ot 3}$ has asymptotic rank at most
\[
n\binom{d^3-1+p}{d^3-1}^{\frac{1}{p}}\,.
\]
\end{Cor}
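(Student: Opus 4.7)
The plan is to apply Proposition \ref{prop:setspanning} with $S=\sigma_n(X)$, combine with the density of honest rank-$\leq n$ tensors inside $\sigma_n(X)$, and then use submultiplicativity of rank under $\boxtimes$ together with subadditivity under sums; the asymptotic-rank statement will then follow by taking $p$-th roots in $\Ra(T^{\boxtimes p})$.

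First I would observe that the hypothesis is exactly condition $(1)$ of Proposition \ref{prop:setspanning} for $S=\sigma_n(X)$, so $K_{d,p}(\sigma_n(X))$ linearly spans $L_{d,p}$. By definition of the secant variety, the locus of tensors of rank at most $n$ is Zariski-dense in $\sigma_n(X)$; since $K_{d,p}$ is a morphism, its image is dense in $K_{d,p}(\sigma_n(X))$, and as linear subspaces are Zariski-closed, the subset $\{T^{\boxtimes p}:\Ra(T)\leq n\}$ already spans $L_{d,p}$. Using $\dim L_{d,p}=\binom{d^3-1+p}{d^3-1}$, every element of $L_{d,p}$ is a linear combination of at most that many tensors of the form $T^{\boxtimes p}$ with $\Ra(T)\leq n$. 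Each such $T^{\boxtimes p}$ has rank at most $n^p$ by submultiplicativity of rank under Kronecker products, and subadditivity of rank under sums then yields the first bound: every tensor in $L_{d,p}$ has rank at most $\binom{d^3-1+p}{d^3-1}n^p$.

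For the asymptotic-rank statement I would apply this to an arbitrary $T\in(\F^d)^{\otimes 3}$. By Proposition \ref{prop:T^ginimage}, $T^{\boxtimes p}=K_{d,p}(T)\in L_{d,p}$, so $\Ra(T^{\boxtimes p})\leq\binom{d^3-1+p}{d^3-1}n^p$. Taking $p$-th roots and using $\ARa(T)\leq \Ra(T^{\boxtimes p})^{1/p}$ (which follows from $\ARa(T)^p=\ARa(T^{\boxtimes p})\leq\Ra(T^{\boxtimes p})$) delivers the claimed asymptotic bound $n\binom{d^3-1+p}{d^3-1}^{1/p}$.

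The main technical point I expect to watch is the density step: we need the spanning conclusion with honest rank-$\leq n$ tensors rather than only border-rank-$\leq n$ tensors, because we seek a rank (not border-rank) bound for tensors in $L_{d,p}$. This is what distinguishes the statement from a pure border-rank analogue, but it is essentially tautological given that $\sigma_n(X)$ is by definition the Zariski closure of the rank-$\leq n$ locus and that $K_{d,p}$ is a polynomial (hence Zariski-continuous) map. Apart from this, the remaining ingredients are bookkeeping with the dimension of $L_{d,p}$ and standard multiplicativity/subadditivity properties of (asymptotic) tensor rank.
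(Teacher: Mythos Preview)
Your proposal is correct and follows the same line as the paper's (very terse) proof: apply Proposition~\ref{prop:setspanning} to get that $L_{d,p}$ is spanned by $p$\textsuperscript{th} Kronecker powers of rank-$\leq n$ tensors, then use subadditivity and submultiplicativity of rank, and finally take $p$\textsuperscript{th} roots for the asymptotic statement. Your density argument ensuring that honest rank-$\leq n$ tensors (not merely border-rank-$\leq n$) already span is a detail the paper leaves implicit.
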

\begin{proof}
Note that  $\dim L_{d,p}=\binom{d^3-1+p}{d^3-1}$ and by Proposition \ref{prop:setspanning} we have a basis of $L_{d,p}$ made of tensors of rank at most $n^p$.
The last statement follows. 
\end{proof}
A more general, but less explicit version is given by Theorem \ref{thm:noequlowrank}, which we derive as the following corollary.
\begin{Cor}[Absence of low-degree equations implies low asymptotic rank; implicit version]\label{cor:noeqthenlowrankimplicit}
Let $Y\subseteq (\F^d)^{\ot 3}$ be a subset with the property that for all $T\in Y$ the asymptotic rank of $T$ is at most $n$. Suppose that no homogeneous polynomial of degree $p$ vanishes on $Y$. Then, every tensor in $(\F^d)^{\ot 3}$ has asymptotic rank at most
\[
n\binom{d^3-1+p}{d^3-1}^{\frac{1}{p}}\,.
\]
\end{Cor}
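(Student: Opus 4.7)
The plan is to mimic the proof of Corollary~\ref{cor:noeqthenlowrank}, but replace ordinary rank by asymptotic rank throughout and use the abstract subset $Y$ in place of a secant variety. The key input is Proposition~\ref{prop:setspanning}, which turns the hypothesis about vanishing polynomials into a linear-algebraic spanning statement about $K_{d,p}(Y)$.

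First, I would invoke Proposition~\ref{prop:setspanning}: since no homogeneous polynomial of degree $p$ vanishes on $Y$, the image $K_{d,p}(Y)$ linearly spans $L_{d,p}$. From this span I can extract a finite basis $K_{d,p}(T_1),\ldots,K_{d,p}(T_N)$ of $L_{d,p}$, with $T_i \in Y$ and $N = \dim L_{d,p} = \binom{d^3-1+p}{d^3-1}$. Now for an arbitrary $T \in (\F^d)^{\otimes 3}$, the Kronecker power $T^{\boxtimes p} = K_{d,p}(T)$ lies in $L_{d,p}$, so there exist scalars $\mu_1,\ldots,\mu_N \in \F$ with
\[
T^{\boxtimes p} = \sum_{i=1}^{N} \mu_i T_i^{\boxtimes p}\,.
\]

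Next I would apply subadditivity of asymptotic rank together with its invariance under scalar multiples to obtain
\[
\ARa(T^{\boxtimes p}) \leq \sum_{i=1}^{N} \ARa(T_i^{\boxtimes p})\,.
\]
Using the multiplicativity $\ARa(T_i^{\boxtimes p}) = \ARa(T_i)^p$ (which follows immediately from the definition $\ARa(S) = \lim_{k}\Ra(S^{\boxtimes k})^{1/k}$ applied to $S = T_i^{\boxtimes p}$) and the hypothesis $\ARa(T_i) \leq n$, this gives
\[
\ARa(T^{\boxtimes p}) \leq N \cdot n^p = \binom{d^3-1+p}{d^3-1}\, n^p\,.
\]

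Finally, I would take $p$\textsuperscript{th} roots: since $\ARa(T^{\boxtimes p}) = \ARa(T)^p$, extracting the root yields
\[
\ARa(T) \leq n \binom{d^3-1+p}{d^3-1}^{\frac{1}{p}}\,,
\]
which is the claimed bound. There is no real obstacle here beyond verifying that asymptotic rank behaves subadditively under addition and multiplicatively under Kronecker product; both are standard (cf.~\cite{WigdersonZ2023}) and already used in the proof of Lemma~\ref{lem:estimate_from_Tg}. The only point requiring a trace of care is that Proposition~\ref{prop:setspanning} is stated for a subset $S$, and we need a finite spanning subset contained in $Y$—this is immediate because $L_{d,p}$ is finite-dimensional, so any spanning set contains a basis.
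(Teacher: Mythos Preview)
Your proof is correct and follows essentially the same approach as the paper. The paper does not give a separate proof for this corollary, treating it as an immediate adaptation of the argument for Corollary~\ref{cor:noeqthenlowrank} with asymptotic rank in place of rank; you have spelled out precisely that adaptation, invoking Proposition~\ref{prop:setspanning}, subadditivity of asymptotic rank, and the identity $\ARa(T^{\boxtimes p})=\ARa(T)^p$.
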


\begin{Rem}\label{rem:improvements}
The dimension component $\binom{d^3-1+p}{d^3-1}$ may be improved (by going to border rank) by the smallest $s$, such that $\sigma_s(K_{d,p}(\sigma_n(X)))=L_{d,p}$. The latter number is in most cases much smaller then $\dim L_{d,p}$. 
Each particular case may be exactly studied by the Teraccini Lemma. 

Alternatively we could greedily, starting from rank one and going up, take tensors $T_i$ such that $K_{d,p}(T_i)$ are linearly independent, building a basis of $L_{d,p}$. 
\end{Rem}

For tensors belonging to a fixed subspace $W\subseteq (\F^d)^{\otimes 3}$ we also have the following variant.
\begin{Thm}
Let $Y\subseteq W\subseteq (\F^d)^{\otimes 3}$ be a subset with the property that for all $T\in Y$ the asymptotic rank of $T$ is at most $n$. Suppose that no homogeneous polynomial on $W$ of degree $p$ vanishes on $Y$. Then every tensor in $W$ has asymptotic rank at most
\[
n\binom{\dim W-1+p}{\dim W-1}^{\frac{1}{p}}\,.
\]
\end{Thm}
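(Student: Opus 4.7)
The plan is to adapt the argument behind Theorem~\ref{thm:noequlowrank} (and its underlying Proposition~\ref{prop:setspanning} and Lemma~\ref{lem:lintopoly}) by replacing the ambient space $(\F^d)^{\otimes 3}$ with the subspace $W$. First I would define $L_{W,p}\subseteq(\F^{d^p})^{\otimes 3}$ as the linear span of $\Kp_{d,p}(W)$. Picking any basis $v_1,\dots,v_m$ of $W$ with $m=\dim W$ and writing $T=\sum_{i=1}^m x_i v_i\in W$, the entries of $\Kp_{d,p}(T)=T^{\boxtimes p}$ are monomials of degree $p$ in $x_1,\ldots,x_m$. Collecting equal monomials shows that $L_{W,p}$ is spanned by the coefficient tensors indexed by compositions of $p$ with $m$ parts, so
\[
\dim L_{W,p}\leq \binom{\dim W-1+p}{\dim W-1}.
\]

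Next I would establish the subspace analog of Lemma~\ref{lem:lintopoly}: the linear map $L_{W,p}^{*}\to\{\text{homogeneous polynomials of degree }p\text{ on }W\}$ sending $\ell\mapsto \ell\circ\Kp_{d,p}|_W$ is an isomorphism. Surjectivity is immediate from the explicit coefficient description above, while injectivity follows from the fact that, over an infinite field (in particular $\F=\C$ as assumed), no nonzero polynomial of degree $p$ vanishes identically on $W\cong\F^m$. Using this isomorphism, the hypothesis that no degree-$p$ homogeneous polynomial on $W$ vanishes on $Y$ translates into the statement that no nonzero linear functional on $L_{W,p}$ vanishes on $\Kp_{d,p}(Y)$, that is, $\Kp_{d,p}(Y)$ linearly spans $L_{W,p}$ (analog of Proposition~\ref{prop:setspanning}). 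Therefore one can extract from $\Kp_{d,p}(Y)$ a basis of $L_{W,p}$ of cardinality at most $\binom{\dim W-1+p}{\dim W-1}$, consisting of tensors $T_i^{\boxtimes p}$ with $T_i\in Y$. By hypothesis $\ARa(T_i)\leq n$, so $\ARa(T_i^{\boxtimes p})=\ARa(T_i)^p\leq n^p$ by multiplicativity of asymptotic rank under Kronecker powers.

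For the conclusion, given any $T\in W$, the tensor $T^{\boxtimes p}$ lies in $L_{W,p}$ and hence decomposes as a sum of at most $\binom{\dim W-1+p}{\dim W-1}$ such basis tensors. Subadditivity of asymptotic rank then yields
\[
\ARa(T)^p=\ARa(T^{\boxtimes p})\leq \binom{\dim W-1+p}{\dim W-1}\,n^p,
\]
and taking the $p$\textsuperscript{th} root gives the claimed bound. The only nontrivial step is the subspace version of Lemma~\ref{lem:lintopoly}; once this is in place, the rest of the argument is parallel to the full-space case of Corollary~\ref{cor:noeqthenlowrankimplicit}, and the main subtlety is purely bookkeeping of the dimension count in terms of $\dim W$ rather than $d^3$.
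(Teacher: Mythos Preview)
Your proof is correct and follows essentially the same approach as the paper: restrict the Kronecker/Veronese map $K_{d,p}$ to $W$, use the hypothesis (via the subspace analog of Lemma~\ref{lem:lintopoly} and Proposition~\ref{prop:setspanning}) to see that $K_{d,p}(Y)$ spans $L_{W,p}$, bound $\dim L_{W,p}$ by $\binom{\dim W-1+p}{\dim W-1}$, and conclude by subadditivity of asymptotic rank. One small wording slip: the entries of $T^{\boxtimes p}$ are homogeneous degree-$p$ polynomials (not monomials) in the $x_i$, but your ``collecting equal monomials'' step and the resulting dimension bound are unaffected.
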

\begin{proof}
We recall that the restriction of the Veronese map of degree $p$ to any linear subspace is still a Veronese map of degree $p$; that is, ~a map defined by polynomials spanning the space of degree $p$ polynomials, simply in smaller number of variables. 

We restrict $K_{d,p}$ to $W$. The linear span of $K_{d,p}(W)$ is spanned by $K_{d,p}(Y)$ and has dimension $\binom{\dim W-1+p}{\dim W-1}$. Hence for any tensor $T\in W$ the asymptotic rank of $K_{d,p}(T)$ is at most:
\[
n^p\binom{\dim W-1+p}{\dim W-1}\,.
\]
As $K_{d,p}(T)$ is the $p$\textsuperscript{th} Kronecker power of $T$, the statement follows. 
\end{proof}  
The previous result could be particularly useful if we can provide many examples of low rank tensors inside a linear subspace. 

\subsection{Equations of secant varieties and applications}
We note that deciding if there exists a nonzero polynomial of degree $p$ on $(\F^d)^{\otimes 3}$ vanishing on tensors of rank $r$ is a question in linear algebra. The na\"\i{}ve approach is a as follows.

Consider $3r$ vectors $v_{i,j}\in \F^d$ where $1\leq i \leq 3$ and $1\leq j\leq r$ with coordinates $(v_{i,j})_k$ treated as variables. The tensor $T:=\sum_{j=1}^r v_{1,j}\otimes v_{2,j}\otimes v_{3,j}$ is a general tensor of rank $r$. We have:
\[
T_{a,b,c}=\sum_{j=1}^r (v_{1,j})_a (v_{2,j})_b (v_{3,j})_c\,.
\]

If we assign the weight $(1,0,0)\in\Z^3$ to each $(v_{1,j})_a$, the weight $(0,1,0)$ to each $(v_{2,j})_b$, and the weight $(0,0,1)$ to each $(v_{3,j})_c$, we see that an evaluation of a degree $p$ polynomial $P$ on $T$ is a degree $(p,p,p)$ polynomial. Furthermore, $P$ vanishes on all tensors of rank at most $r$ if and only if $P(T)=0$ as a polynomial. To check the last condition we may build a matrix $N_{d,r,p}$ with:
\begin{enumerate}
\item $\binom{p+d^3-1}{p}$ columns indexed by $\sC^{[d]^3}_p$, equivalently monomials of degree $p$,
\item $\binom{p+dr-1}{p}^3$ rows indexed by monomials of degree $(p,p,p)$ in variables $(v_{i,j})_k$,
\item the entry in a column indexed by a monomial $g\in \sC^{[d]^3}_p$ and a row indexed by a monomial $m$ is the coefficient of $m$ in the polynomial $g(T)$.
\end{enumerate}
The kernel of $N_{d,r,p}$ is naturally identified with the space of homogeneous degree $p$ polyomials vanishing on rank $r$ tensors. In particular, the kernel is trivial if and only if there are no such polynomials. While this approach is very general, it is also not applicable in most cases due to the large sizes of the matrices $N_{d,r,p}$.

A better approach to understand the equations of secant varieties is to exploit group actions. For this, one decomposes the space $S^p(\C^d\otimes\C^d\otimes \C^d)$ of homogeneous degree $p$ polynomials on the tensor space as a direct sum of irreducible $G:=GL(n)\times GL(n)\times GL(n)$ representations. The irreducible polynomial representations of $G$ are indexed by triples of Young diagrams, in our case each one in the triple will have $p$ elements:
\[
S^p(\C^d\otimes\C^d\otimes \C^d)=\bigoplus V_{\lambda,\mu,\nu}^{\bigoplus g_{\lambda,\mu,\nu}}\,,
\]
where $V_{\lambda,\mu,\nu}$ is the triple corresponding to three Young diagrams $\lambda,\mu,\nu$ with $p$ boxes each and $ V_{\lambda,\mu,\nu}
$ is the Kronecker coefficient. The vector space of homogeneous polynomials of degree $p$ that vanish on the $n$\textsuperscript{th} secant variety is a subrepresentation and hence also may be decomposed as:
\[
\bigoplus V_{\lambda,\mu,\nu}^{\bigoplus a_{\lambda,\mu,\nu}}\,,
\]
where now the coefficients $a_{\lambda,\mu,\nu}\leq g_{\lambda,\mu,\nu}$ are unknown. One way to determine them is to consider highest weigh space $HWS_{\lambda,\mu,\nu}$, that is a vector space of dimension $g_{\lambda,\mu,\nu}$ inside $V_{\lambda,\mu,\nu}^{\bigoplus g_{\lambda,\mu,\nu}}.$
It turns out that $a_{\lambda,\mu,\nu}$ equals the dimension of the subspace of $HWS_{\lambda,\mu,\nu}$ consisting on those polynomials that vanish on the $n$\textsuperscript{th} secant variety. Finding the basis of $HWS_{\lambda,\mu,\nu}$ and efficiently evaluating it on points of the secant variety is an art on its own. However, in many examples, the procedure above was successfully carried out in practice. We refer to \cite{breiding2022equations, hauenstein2013equations} for details.

\begin{Exa}
The generic rank in $\C^7\ot\C^7\ot\C^7$ is $19$. Tensors of border rank $18$ form a hypersurface with the defining equation of degree at least $187000$ \cite[Section 3.2]{hauenstein2013equations}. Applying Corollary \ref{cor:noeqthenlowrank} we see that every tensor in that space has asymptotic rank smaller than $18.25$. 
\end{Exa}

We believe that many further results on asymptotic rank may be obtained in a similar way, through Corollary \ref{cor:noeqthenlowrank} and \ref{cor:noeqthenlowrankimplicit}, especially in combination with Remark \ref{rem:improvements}. We leave this for future work.

\section*{Acknowledgments}

We would like to thank Andreas Bj\"orklund and Joseph Landsberg
for useful discussions and important comments. The second author is supported
by the DFG grant 467575307.

%%%%%%%%%%%%%%%%%%%%%%%%%%%%%%%%%%%%%%%%%%%%%%%%%%%%%%%%%%%%%%%% References %%%

%%%%%%%%%%%%%%%%%%%%%%%%%%%%%%%%%%%%%%%%%%%%%%%%%%%%%%%%%%%%% Document ends %%%

\end{document}